\title{Time value of extra information against its timely value\thanks{I am grateful to Gerhard-Wilhelm Weber and Edward Hoyle for their comments, and to T\"{U}B\.{I}TAK for supporting this research.}}
\author{N. Serhan Aydin} 
\newcommand{\mathsym}[1]{{}}
\newcommand{\F}{\mathcal{F}}
\newcommand{\QM}{\mathbb{Q}}
\newcommand{\E}{\mathbb{E}}
\newcommand{\V}{\mathbb{V}}
\newcommand{\N}{\mathcal{N}}
\newcommand{\s}{\sigma\text{-algebra}}
\newcommand{\Om}{\Omega}
\newcommand{\D}{\Delta}
\newcommand{\nn}{\nonumber}
\newcommand{\ls}{\left[}
\newcommand{\rs}{\right]}
\newcommand{\lr}{\left(}
\newcommand{\rr}{\right)}
\newcommand\independent{\protect\mathpalette{\protect\independenT}{\perp}}
\def\independenT#1#2{\mathrel{\rlap{$#1#2$}\mkern2mu{#1#2}}}
\newtheorem{theorem}{Theorem}
\newtheorem{corollary}[theorem]{Corollary}
\newtheorem{proposition}[theorem]{Proposition}
\newtheorem{proof}{Proof}
\begin{document}
\maketitle

% REQUIRED
\begin{abstract}
We introduce an interactive market setup with sequential auctions where agents receive variegated signals with a known deadline. The effects of differential information and mutual learning on the allocation of overall profit \& loss (P\&L) and the pace of price discovery are analysed. We characterise the signal-based expected P\&L of agents based on explicit formulae for the directional quality of the trading signal, and study the optimal trading pattern using dynamic programming and provided that there is a common anticipation by agents of gains from trade. We find evidence in favour of exploiting new information whenever it arrives, and market efficiency. Brief extensions of the problem to risk-adjusted gains as well as risk-averse agents are provided. We then introduce the `information-adjusted risk premium' and recover the signal-based equilibrium price as the weighted average of the signal-based individual prices with respect to the risk-aversion levels.

Keywords: Information flow, signal-based pricing, random bridge processes, mutual learning, asymmetric information, optimal trading

AMS Classification: 60G35, 65P99, 49L20, 49M99
\end{abstract}

% REQUIRED
%\begin{keywords}
%\end{keywords}

% REQUIRED
%\begin{AMS}
%\end{AMS}

\section{Introduction}
The raison d'\^etre of the markets we study is, in fact, to support information-based trading. Trade can occur on purely informational causes. In \cite{BondEraslan2010}, for example, we are shown that there are situations in which both parties are strictly better off under a trade executed solely on the basis of their individual information. This is somewhat contrary to, e.g., \cite{GrossmanStiglitz1980} and \cite{MilgromStokey1982}. Indeed, one can be overwhelmed by the task of handling a very broad spectrum of aspects where agent-level heterogeneity can arise, such as risk aversion levels, degrees of rationality, patience, beliefs, and information gathering, processing skills, and so on. A detailed classification of different market microstructure models, on the other hand, is given in \cite{Brunnermeier2001} and beyond the scope of this study. However, we start with a review of the selected literature.

Perhaps one of the earliest sequential (discrete) trade models is the one described in the work of Glosten and Milgrom (cf. \cite{GlostenMilgrom1985}), where an attempt is made to explain bid-ask spread as a purely informational phenomenon that is believed to be arising from adverse selection behaviour encountered by less-informed traders. The informational properties of transaction prices and the reaction of the spread to market-generated as well as other public information is also investigated. One of the interesting implications of this model is the possibility of market shutdowns due to severe informational inefficiencies. This is similar to the ``lemons problem'' of Akerlof \cite{Akerlof1970}. The informational content of prices and the value of extra information to the holder are also examined in the work of Kyle (cf. \cite{Kyle1985}) through sequential as well as continuous auction models. Moreover, the latter two seem to converge as the trading interval gets smaller. One interesting result of the model discussed in \cite{Kyle1985}, and to a certain extent in \cite{GlostenMilgrom1985}, is that modelling innovations as functions of quantities traded is found to be consistent with modelling price innovations as the consequence of new information arrivals. The `informativeness of prices' (which is complementary to the amount of information which is yet to be incorporated into prices) in the context of \cite{Kyle1985} refers to the error variance of future dividend given the market clearing price. The question is how intensively the agent, given his superior signal, should trade over time to maximise his profit given his actions might disturb the market (i.e., prices and depth). This model is later on extended in \cite{Back1992} to general continuous distributions for the dividend. Then, a modified version of \cite{GlostenMilgrom1985} in continuous time, where `bluffing' (i.e., mixed strategy) is also allowed, is shown to converge to, again, a modified version of \cite{Kyle1985} with a random signal deadline in \cite{BackBaruch2004}. A rather game-theoretic approach to signal-based trade is taken in \cite{BondEraslan2010} where, this time, the dividend is let endogenously be determined by the action of the agent and its correspondence with the realised fundamental. The signals, in this case, are related to the action that needs to be taken. A sufficient level of signal precision is found to be necessary and sufficient for establishing the case where both seller and buyer are better off from trade in expectation (referred to as ``common knowledge of gains from trade'' in \cite{BondEraslan2010}), which is the equilibrium.

So far, there is no clear mention of the explicit dynamics of information flow, which is the subject of heterogeneity, and it is understood to be an `immediate access' to a publicly unknown value $\phi(X_T)$ without any noise component. Building on \cite{Back1992} and \cite{Kyle1985}, a learning component is added in \cite{BackPedersen1998}. This means the signals are now long-lived with a signal-to-noise varying in time. Although this made possible the mentioning of information `flow' in its true meaning, the interpretation of `learning' through signal in \cite{BackPedersen1998} is slightly different in that when the noise-to-signal, i.e., reciprocal of signal-to-noise, is large, this means the agent is learning a lot. Yet, interestingly, given the total amount of information disparity in favour of the more informed, the pattern in which the information flows is found to be rather irrelevant in equilibrium. Later on, the long-lived signal process is associated with a exponential distributed random deadline (as in \cite{BackBaruch2004} earlier) in \cite{CaldenteyStacchetti2010}. In fact, a random deadline changes the way the strategies for exploiting extra information are structured in various ways, with one way being that agents do not rush to unload their information before it becomes useless and, accordingly, trade frantically as deadline approaches. Backward induction methods of dynamics programming are also rendered inapplicable. 

Perhaps the most interesting alternative to the models of `diverse information' models (where agents do generally share the same probability measure but work over distinct probability spaces) are those of `diverse beliefs'. One way to account for the diversity of beliefs is through equivalent (i.e., defined over the same filtered probability space) probability measures which reflect agents' personal beliefs on the true value of the dividend, as in \cite{BrownRogers2012}. This is maintained by likelihood ratio martingales (or, density processes). Interestingly, the equivalence of the latter two models is established, even without a particular choice of explicit signal structure for private information. And, not so interestingly, the greater the diversity of beliefs, the larger the volume of trade is. A similar approach is found in \cite{CvitanicJouiniMalamudNapp2012} where an equilibrium is established in terms of `surviving agents.' In a belief-heterogeneous market, the surviving agent is found to be the one who is the most rational. Last but not least, in cognisance of the important role played by dynamic optimisation in approaching heterogeneous financial market equilibrium problems, we underline two recent accounts of the latter, i.e. \cite{BussDumas2013} and \cite{DumasLyasoff2012}, where how, in a market of two agents with heterogeneous characteristics, equilibria for various quantities can be found by means of a single backward-induction algorithm is vividly shown.

The approach, in the rest of this paper, to being informationally advantageous or disadvantageous is analogous to the one in \cite{BrodyDavisFriedmanHughston2009}: we do not view the difference between the latter two as having or not having immediate access to the future value of a variable which is unknown to the public information. We rather view it as having access to efficient streams of information or, equivalently (cf. \cite{AdmatiPfleiderer1988}), being more capable to compile and process large and complex datasets out of publicly available information. Both of these are associated with a higher $\sigma$ of $\xi$ in the present context. Yet, in the sense of \cite{BrownRogers2012}, the present framework can also be seen as a diverse-belief model where beliefs are shaped in time by the information itself. 

% The outline is not required, but we show an example here.

\section{Modelling Information Flow}
\label{secInfoFlow}
The information-based framework was first introduced in \cite{BrodyHughstonMacrina2007} as a new way of modelling credit risk and, later on, applied to a broad spectrum of issues in financial mathematics, including the valuation of insurance contracts, modelling of defaultable bonds, pricing of inflation-linked assets, and modelling of insider trading, before it was generalised to a wider class of L\'{e}vy information processes in \cite{HoyleHughstonMacrina2011}.

Accordingly, we introduce the signal process $(\xi_{t})_{0\leq t\leq T}$ (or, the information process in the sense of \cite{BrodyHughstonMacrina2007})
\begin{equation}
\label{eqXi}
%\beta_{t}^{[0,X_T]}=\xi_{t}&=&\sigma t X_T + B_t-\frac{t}{T}B_T \nn\\
\xi_{t}= \sigma t X_T + \beta_{t}.
\end{equation}
which is a Brownian random bridge (BRB), as defined in \cite{HoyleHughstonMacrina2011} for the general class of L\'{e}vy processes (i.e., L\'{e}vy random bridges or LRBs). Here, $\beta_{t}$ (or, explicitly $\beta_{0T}^{[0,0]}(t)$) is a standard Brownian bridge over the period $[0,T]$ which takes on value $0$ at the beginning and end, and $\sigma$ is a measure of true signal to noise (henceforth, just `signal-to-noise'). The latter governs the overall speed of revelation of true information about the actual value of the fundamental $X_T$.

We also remark that Eq. \eqref{eqXi} is not the only way to represent information flow. Some other forms have also been considered in the literature with slightly different characteristics, such as, $\xi_{t}=tX_T + \beta_{t}$, $\xi_{t}=(t/T) X_T + \sigma \beta_{t}$ or $\xi_{t}=(t/T) X_T + \beta_{t}$.

More formally, we define a probability space $(\Om,\F,\QM)$, on which the filtration $(\F_t^{\xi})_{t\in {\ls0,T\rs}}$ is constructed. Here, $\QM$, i.e., the risk-neutral measure, is assumed to exist. The default measure is set to $\QM$ throughout the thesis, if not stated otherwise. The filtration $\F_t^\xi$ is generated directly by $(\xi_{s})_{0\leq s\leq t}$ and, thus, simply by $\xi_{t}$ itself. The latter simplification follows from the Markov property of $(\xi_{s})_{0\leq s\leq t}$.

\begin{proposition}
\label{propMarkov}
The process $(\xi_{t})_{0\leq t\leq T}$, as defined in Eq. \eqref{eqXi}, is conditionally Markovian.
\end{proposition}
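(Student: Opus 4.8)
\section*{Proof proposal}

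The plan is to read the claim of Proposition~\ref{propMarkov} as the Markov property of $(\xi_t)$ under $\QM$ relative to its own filtration $\F_t^\xi$, and to establish it by computing the finite-dimensional densities of $(\xi_t)$ explicitly and showing that the conditional law of a future value given the whole past collapses to a dependence on the present value alone. First I would expose the bridge structure behind Eq.~\eqref{eqXi}. Writing the standard Brownian bridge as $\beta_t = B_t - (t/T)B_T$ for a Brownian motion $B$ independent of $X_T$, one checks that $\mathrm{Cov}(\beta_t,B_T)=0$, so $\beta$ is independent of $B_T$ and hence of $\xi_T=\sigma T X_T$. Since $\sigma t X_T = (t/T)\xi_T$, Eq.~\eqref{eqXi} rearranges to $\xi_t = (t/T)\xi_T + \beta_t$ with the bridge part independent of the terminal value. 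Thus $(\xi_t)$ is a Brownian random bridge in the sense of \cite{HoyleHughstonMacrina2011}: conditional on $\xi_T=z$ it is an ordinary Brownian bridge pinned to $z$, whose law is that of a Brownian motion conditioned on its terminal value.

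Second, using this, I would write the joint density at ordered times $0=t_0<t_1<\cdots<t_n<T$ as a mixture over the terminal value. With $g_t(x)=(2\pi t)^{-1/2}\exp(-x^2/(2t))$ the Gaussian transition kernel, $\nu$ the law of $\xi_T=\sigma T X_T$, and the convention $x_0=0$,
\begin{equation}
p(x_1,\ldots,x_n)=\ls\prod_{i=1}^{n} g_{t_i-t_{i-1}}(x_i-x_{i-1})\rs \int_{\R}\frac{g_{T-t_n}(z-x_n)}{g_T(z)}\,\nu(dz).
\end{equation}
The crucial observation is that the product of one-step kernels factors out of the mixing integral, leaving a residual factor $\psi_n(x_n):=\int_{\R} g_{T-t_n}(z-x_n)\,g_T(z)^{-1}\,\nu(dz)$ that depends on the path only through the last coordinate $x_n$.

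Third, I would integrate out $x_n$: the Chapman--Kolmogorov identity $\int_{\R} g_{t_n-t_{n-1}}(x_n-x_{n-1})\,g_{T-t_n}(z-x_n)\,dx_n = g_{T-t_{n-1}}(z-x_{n-1})$ shows that the marginal $p(x_1,\ldots,x_{n-1})$ has the identical form with $\psi_{n-1}$ in place of $\psi_n$. Dividing then yields
\begin{equation}
p(x_n\mid x_1,\ldots,x_{n-1})=\frac{g_{t_n-t_{n-1}}(x_n-x_{n-1})\,\psi_n(x_n)}{\psi_{n-1}(x_{n-1})},
\end{equation}
which depends on $x_1,\ldots,x_{n-1}$ only through $x_{n-1}$. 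Since the Markov property for every finite collection of times characterises the process-level Markov property and $\F_t^\xi=\sigma(\xi_s:0\le s\le t)$, this is exactly the assertion, and it is precisely what licenses replacing conditioning on $\F_t^\xi$ by conditioning on $\xi_t$ in the subsequent analysis.

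I expect the main obstacle to be the justification of the finite-dimensional density formula itself --- that the random-bridge law is the $\nu$-mixture of Brownian-bridge laws --- together with the mild regularity needed to manipulate it: one must ensure $g_T(z)>0$ on the support of $\nu$ and that $\psi_n$ is finite so the ratios are well defined, and one must confirm that the conditioning argument over finitely many times extends to the full filtration $\F_t^\xi$ by a standard monotone-class (approximation) step. The terminal instant $t_n=T$ is degenerate (the bridge is pinned) and is handled separately as a measurable limit.
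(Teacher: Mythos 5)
Your proposal is correct, but it proves a different statement from the paper's own proof, by a genuinely different route. The paper reads ``conditionally Markovian'' literally: it fixes $X_T=x$, uses the bridge representation $\xi_t=\sigma t x+(T-t)\int_0^t (T-s)^{-1}\,\mathrm{d}B_s$ to obtain the SDE $\mathrm{d}\xi_t=\lr \sigma x-\xi_t/T\rr \kappa_t\,\mathrm{d}t+\mathrm{d}B_t$, and concludes that the law of $\xi$ \emph{conditional on} $X_T=x$ is that of a diffusion, hence Markov with respect to its own filtration. You instead prove the \emph{unconditional} Markov property of $(\xi_t)$ under $\QM$ relative to $\F_t^\xi$, by exhibiting its law as the $\nu$-mixture of Brownian-bridge laws and checking from the explicit finite-dimensional densities that the conditional density of $\xi_{t_n}$ given the past depends only on $\xi_{t_{n-1}}$. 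These are distinct assertions --- a mixture of Markov laws is not Markov in general, and that is precisely why your observation that the mixing factor $\psi_n$ collapses to a function of the last coordinate is the crux --- so neither proof subsumes the other. In the paper's favour, the SDE route is shorter, and the representation it produces is reused later in deriving the price dynamics \eqref{eqDyn7}--\eqref{eqWt}. In your favour, the unconditional property is the one the text actually invokes right after the proposition: the reduction $\E\ls \phi(X_T) \bigr\rvert \F_t^\xi\rs=\E\ls \phi(X_T) \bigr\rvert \xi_t\rs$ in Eq.~\eqref{eqValue1} and the Bayes formula \eqref{eqValue4} require the Markov property of $\xi$ itself (applied to $h(\xi_T)$ with $\phi(X_T)=h(\xi_T)$, since $\xi_T=\sigma T X_T$), which the conditional statement alone does not deliver; your argument, essentially the Hoyle--Hughston--Macrina proof for L\'evy random bridges, also works verbatim for discrete or continuous payoff laws and yields explicit transition densities. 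The regularity concerns you flag are indeed harmless: $g_T>0$ everywhere, the ratio $g_{T-t_n}(z-x_n)/g_T(z)$ is bounded in $z$ so $\psi_n<\infty$ for any probability law $\nu$, and the monotone-class extension and the pinned endpoint $t_n=T$ are routine. One small tightening: the independence you actually need for the bridge decomposition is $\beta\independent X_T$ (a model assumption), from which $\beta\independent \xi_T$ follows because $\beta_T=0$ forces $\xi_T=\sigma T X_T$; the covariance computation with $B_T$ is not by itself what delivers this.
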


\begin{proof}
Let $X_T=x$. Defining $B_t$ as a Brownian motion, we can indeed express the signal process $\xi_{t}$ as
\begin{equation}
\label{eqBridgeNew}
\sigma t x + \kappa_t^{-\frac{1}{2}}B_t \quad\text{or}\quad \sigma t x + \kappa_t^{-\frac{1}{2}}\displaystyle\int_0^t \text{d}B_s .
\end{equation}
One can verify that these are identical to
\begin{equation}
\label{eqBridgeNew2}
\xi_{t}= \sigma t x + (T-t)\displaystyle\int_0^t \frac{\text{d}B_s}{T-s},
\end{equation}
which, in turn, implies 
\begin{eqnarray}
\label{eqBridgeNew3}
\text{d}\xi_{t} &=& \lr\sigma x -\displaystyle\int_0^t \frac{\text{d}B_s}{T-s}\rr \text{d}t+ (T-t)\frac{\text{d}B_t}{T-t}\nn\\
&=& \lr\sigma x -\frac{\xi_{t}-\sigma t x}{(T-t)}\rr \text{d}t + \text{d}B_t\nn\\
&=& \lr\sigma x - \xi_{t}/T\rr \kappa_t \text{d}t + \text{d}B_t.
\end{eqnarray}

Equations \eqref{eqBridgeNew2} and \eqref{eqBridgeNew3} indeed follow from two other well-known representations of bridges (see, e.g., \cite{Oksendal1998}). Eq. \eqref{eqBridgeNew3}, on the other hand, directly implies that, given $X_T=x$, $\xi_{t}$ is a Markov process with respect to its own filtration, i.e.,

\begin{equation}
\label{eqMarkov1}
\E [ h(\xi_{t}) | \sigma\lr\xi_{r}\rr_{r\leq s}  ] = \E [ h(\xi_{t}) | \sigma(\xi_{s}) ]\quad (s\leq t),
\end{equation}

for any $x$, and any measurable, finite-valued function $h$ (cf. \cite{Oksendal1998}).
\end{proof}

We are now in a position to work out, with respect to the available information $\F_t^\xi$, the value $S_t$ and dynamics $\text{d}S_t$ of an asset which generates a cashflow $\phi (X_{T})$ at time $T$ for some invertible function $\phi$. The value $S_t$, $0\leq t < T$, is given by
\begin{eqnarray}
\label{eqValue1}
%S_t = \sum _{k=1}^n \mathbf{1}_{\left\{t<T_k\right\}} e^{-\mu (T_k -t)} \E\ls \D_{T_k} | \F_t^\xi\rs ,
S_t = \delta_t^{-1} \E\ls \phi(X_T) \bigr\rvert \F_t^\xi\rs = \delta_t^{-1} \E\ls \phi(X_T) \bigr\rvert \xi_{t}\rs = \displaystyle\int_\mathbb{X} \phi(x) \pi_t (x) \text{d}x,
%\text{\scriptsize{(or, simply)}}&=&\mathbf{1}_{\left\{t<T\right\}} e^{- r (T -t)} \phi_{t}.
\end{eqnarray}
where $\delta_t:=\mathbf{1}_{\left\{t<T\right\}} e^{r (T-t)}$ is the num\'{e}raire and $\pi_t (x):=p(x|\xi_{t})$ the posterior density of the payoff. The quantities $X_T$ and $\phi(X_{T})$ are measurable with respect to $\F_{T}^{\xi}$, but not necessarily w.r.t. $\F_t^{\xi}$, $t<T$. On an important note, we remark that $\beta_{t}$, i.e., the pure noise, is not measurable w.r.t. $\F^{\xi}_t$, meaning that it is not directly accessible to market agents. Thus, an agent, although he observes $\xi_{t}$, cannot separate true signal from noise until time $T$. 

Using Bayesian inference, $S_t$ can be expressed as
\begin{equation}
\label{eqValue4}
S_t = \delta_t^{-1} \frac{\displaystyle\int_\mathbb{X} \phi(x) p(x) e^{\kappa_t\lr\sigma x \xi_{t} - \frac{1}{2} \sigma^2 x^2 t \rr} \text{d}x}{\displaystyle\int_\mathbb{X} p(x) e^{\kappa_t\lr\sigma x \xi_{t} - \frac{1}{2} \sigma^2 x^2 t \rr} \text{d}x},
\end{equation}
with $\kappa_t :=T/(T-t)$ and $\mathbb{X}$ is the support of payoff distribution, whereas the PDE it satisfies as
\begin{eqnarray}
\label{eqDyn7}
\text{d}S_t &=& r S_t \text{d}t + \Lambda_{t} \text{d}W_t,
\end{eqnarray}
with 
$$\Lambda_{t}:= \delta_t^{-1} \sigma \kappa_t {\mathbb{C}\text{ov}}_t \lr \phi (X_T), X_T \rr$$
and
\begin{equation}
\label{eqWt}
\text{d}W_t := \kappa_t \lr \xi_{t}/T - \sigma \phi_t (X_T) \rr \text{d}t + \text{d}\xi_t.
\end{equation}
One can indeed show, by referring to L\'{e}vy's characterisation \cite{Levy1948}, that $W_t$ is a Brownian motion adapted to $F_t^\xi$ (cf. \cite{BrodyHughstonMacrina2007}).

\begin{corollary}
\label{corGaussian}
Assume, as a particular case, that $\phi$ is an identity, i.e., $\phi(x)=x$, and $X_T\sim\N(0,1)$ a priori. Then, Eq. \eqref{eqValue4} implies
\begin{eqnarray}
\label{eqValueGauss}
S_t = \delta_t^{-1}\frac{\sigma \kappa_t \xi_{t}}{\sigma^2 \kappa_t t+1},
\end{eqnarray}
where $\mathbb{X}=(-\infty,\infty)$.
\end{corollary}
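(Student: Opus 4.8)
The plan is to specialise the general Bayesian formula \eqref{eqValue4} to the Gaussian prior with $\phi$ the identity, and then evaluate the resulting Gaussian integrals explicitly. With $\phi(x)=x$, $X_T\sim\N(0,1)$, the prior density is $p(x)=(2\pi)^{-1/2}e^{-x^2/2}$ and the support is $\mathbb{X}=(-\infty,\infty)$, so \eqref{eqValue4} becomes
\begin{equation}
\label{eqProofGauss1}
S_t = \delta_t^{-1}\,\frac{\displaystyle\int_{-\infty}^{\infty} x\,e^{-x^2/2}\,e^{\kappa_t\lr\sigma x \xi_{t} - \frac{1}{2}\sigma^2 x^2 t\rr}\,\text{d}x}{\displaystyle\int_{-\infty}^{\infty} e^{-x^2/2}\,e^{\kappa_t\lr\sigma x \xi_{t} - \frac{1}{2}\sigma^2 x^2 t\rr}\,\text{d}x},
\end{equation}
where the common factor $(2\pi)^{-1/2}$ has cancelled between numerator and denominator.

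First I would collect the exponent into a single quadratic in $x$. Combining the $x^2$ terms gives a coefficient $-\tfrac{1}{2}\lr 1+\sigma^2\kappa_t t\rr$, and the linear term is $\sigma\kappa_t\xi_t\, x$, so up to an $x$-independent constant the integrand is proportional to $\exp\!\lr -\tfrac{1}{2}a\,x^2 + b\,x\rr$ with $a:=1+\sigma^2\kappa_t t$ and $b:=\sigma\kappa_t\xi_t$. The key step is then to complete the square, writing $-\tfrac{1}{2}a x^2 + b x = -\tfrac{1}{2}a\lr x - b/a\rr^2 + \tfrac{b^2}{2a}$, which identifies the integrand as an (unnormalised) Gaussian density in $x$ with mean $b/a$ and variance $1/a$.

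With the square completed, both integrals are standard Gaussian moments. The denominator evaluates to $\sqrt{2\pi/a}\,e^{b^2/(2a)}$, while the numerator, being the first moment of the same Gaussian, equals $(b/a)\sqrt{2\pi/a}\,e^{b^2/(2a)}$. The Gaussian normalisation and the $e^{b^2/(2a)}$ factor cancel in the ratio, leaving simply the posterior mean $b/a$. Substituting back $a=1+\sigma^2\kappa_t t$ and $b=\sigma\kappa_t\xi_t$ yields
\begin{equation}
\label{eqProofGauss2}
S_t = \delta_t^{-1}\,\frac{b}{a} = \delta_t^{-1}\,\frac{\sigma\kappa_t\xi_t}{\sigma^2\kappa_t t+1},
\end{equation}
which is exactly \eqref{eqValueGauss}. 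I do not anticipate a genuine obstacle here, as the computation is a routine conjugate-Gaussian update; the only points requiring minor care are correctly reading off the coefficients $a$ and $b$ from the posterior exponent in \eqref{eqValue4} and verifying that the integrability condition $a>0$ holds, which it does since $\sigma^2\kappa_t t\geq 0$ for $0\leq t<T$.
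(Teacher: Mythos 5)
Your proposal is correct and is exactly the computation the paper intends: the paper states Corollary \ref{corGaussian} as following directly from Eq. \eqref{eqValue4}, and the implicit derivation is precisely your conjugate-Gaussian update (complete the square with $a=1+\sigma^2\kappa_t t$, $b=\sigma\kappa_t\xi_t$, and read off the posterior mean $b/a$). No gaps; the only care needed is the one you already note, namely $a>0$ for $0\leq t<T$.
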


\section{Model Setup}
\label{secModel}

We assume that there is a pure dealership market comprising risk-neutral agents with heterogeneous informational access. For simplicity, and w.l.o.g., we assume there is a pair of agents, $j=1,2$, with access to the filtrations generated by $\xi_t^1$, $\xi_t^2$, and a single risky asset with payoff $\phi(X_T)$ not being measurable w.r.t. $\F_t^\xi$, $t<T$. We also assume $\sigma_1<\sigma_2$, i.e., agent 2 is informationally more susceptible than agent 1. In our dynamical model, for simplicity of analysis, we suppose that agents trade with each other futures contracts on the single risky asset at sequential auction times $t_i \in [0,T]$ for $i=1,2,\dots,m$, without any intertemporal consumption and exogenous wealth. Both agents simply follow a buy-and-hold strategy. In this setup, execution of trades, besides a potential profit or loss, results in two things. First, they help, e.g., the central-planner, consolidate information sets of agents at time $t_i$ to have a joint information bundle $\sigma(\xi_{t_i}^1,\xi_{t_i}^2)$. Second, the competitive market price will be discovered immediately. Below we will analyse the latter two separately. Limit orders are cleared by a Walrasian matching engine (as in \cite{BussDumas2013}), which can be deemed a central-planner in the context of \cite{BrownRogers2012} or a group of competitive market makers. The central-planner aims solely to maximise the overall expected profit (or, utility) of agents. 
%\footnote{Here we emphasise the term `consolidate,' since how information is consolidated will be one of the key questions in our algorithm.}

We also note that, for any given $t$ and a priori density $p(x)$, the price is a function of $\xi$ and $\sigma$, i.e., $S_t = S(\xi,\sigma)$.  This means, if $S_t$ is observed, then one needs to know $\sigma$ to be able to back out $\xi_{t}$. Without knowledge $\sigma$, the observer cannot infer how reliable an observed sample of $\xi_t$ is.

Moments before the sequential auction time $t_{i}$, agents, having observed their signals, submit to the central planner the bid and ask prices at which they are willing to trade. One key property of our model is that an agent may not necessarily know his signal is superior (i.e., agnostic) and the agents will be able to infer each other's prices, and also information (unless they are `omitters', as described below), when, and only if, a price match occurs and a clearing price is set. Otherwise, limit orders are kept with the auction engine (i.e., closed limit order trading book). This also rules out `bluffing' (cf. \cite{BackBaruch2004,CaldenteyStacchetti2010}). 

Individual bid and ask prices are based on the signal-implied prices worked out by virtue of Eq. \eqref{eqValue4} and trade occurs whenever
\begin{equation}
\label{eqTradeRule}
\varsigma^- S_{t}^{1} \geq \varsigma^+ S_{t}^{2} \quad\text{or}\quad \varsigma^- S_{t}^{2} \geq \varsigma^+ S_{t}^{1},
\end{equation}
with $\varsigma^-$ and $\varsigma^+$ being the constant bid and ask multipliers, respectively, where $\varsigma^-\leq 1$ and $\varsigma^+\geq 1$. Obviously, if Eq. \eqref{eqTradeRule} holds with equality, i.e., if $\varsigma^- S_{t}^{1} = \varsigma^+ S_{t}^{2}$ or $\varsigma^- S_{t}^{2} = \varsigma^+ S_{t}^{1}$, then the market price $S_t^\ast$ will be discovered directly. In case of an inequality, under risk-neutrality assumption, the market will clear at the mid-price:
\begin{equation}
\label{eqMidPrice}
S_t^\ast=\frac{\varsigma^- S_{t}^{1}+\varsigma^+ S_{t}^{2}}{2}\quad\text{or}\quad\frac{\varsigma^- S_{t}^{2}+\varsigma^+ S_{t}^{1}}{2}. 
\end{equation}
In \cite{BussDumas2013}, the authors vividly explain why the real-world interpretation of the price posted by a Walrasian auctioneering computer is the bid-ask midpoint.

The initial contract holdings of agents, as denoted by $\theta_0^j$, $j\in\{1,2\}$, are set to $0$. Here $\theta_t^j$ denotes the total time-$t$ net contract stock held by agent $j$. We also define $\theta_t := \sum_j \theta_t^j$ as the total net contract `stock' held by the central clearing at time $t$. Accordingly, total net order `flow' at time $t$ should be $\D\theta_t$ which is given by
\begin{equation}
%\label{}
\D\theta_t=\sum_j \D\theta_t^j= \sum_j q_t^j = q_t
%\theta_t = \sum_j \theta_t^j = \sum_j \sum_{s=0}^t  \D \theta_s^j =  \sum_j \sum_{s=0}^t q_s^j = \sum_{s=0}^t q_s
\end{equation}
for some trading process $(q_t^j)_{0\leq t\leq T}$, given by
\begin{equation}
\label{eqXiPiecew}
q_t^j := \left\{ \begin{array}{ll}
      q_t^{j+}, & S_t^j>S_t^\ast, \\
      q_t^{j-}, & S_t^j<S_t^\ast, \\
      0, & \text{otherwise},
\end{array} 
\right.
\end{equation} 
with $q_t^{j+}>0$, $q_t^{j-}<0$. Market clearing conditions imply $q_t=\sum_j q_t^j=0$ and, therefore, $\theta_t=0$, $\forall t\in[0,T]$. Now we define the increasing process $(s_t)_{t\in[0,T]}$, i.e., the time of the last trade prior to time $t$, as follows:
\begin{equation}
s_t = \sup\{s: s<t, |q_s^j| > 0\}.
\end{equation}
It is apparent that $s_t$ is $0$ if $t=0$, or if $t>0$ and $q_s^j=0$ $\forall s\in[0,t)$. The ex-post (i.e., at contract expiry) profit/loss of agent $j$ coming from time $t$ transaction can be written as
\begin{eqnarray}
\label{eqOverallProfit}
\Pi_t^{j} &=& \mathbf{1}_{H\cap \{S_t^j>S_t^\ast\}} q_t^{j+} (X-S_t^\ast) + \mathbf{1}_{H\cap \{S_t^j<S_t^\ast\}} q_t^{j-} (X-S_t^\ast) + \\
&& \mathbf{1}_{L\cap \{S_t^j<S_t^\ast\}} q_t^{j-} (X-S_t^\ast) + \mathbf{1}_{L\cap \{S_t^j>S_t^\ast\}} q_t^{j+} (X-S_t^\ast)\nn\\
\text{\scriptsize{(or, simply)}}&=&q_t^{j} (X-S_t^\ast),
\end{eqnarray}
where $S_t^\ast$ is as in Eq. \eqref{eqMidPrice}, and $H$ and $L$ denote high- and low-type markets, respectively (cf. \cite{Kyle1985}). Eq. \eqref{eqOverallProfit} is based on the correspondence of signal and reality. Market clearing conditions again will require $\Pi_t=\sum_j\Pi_t^j=0$ $\forall t\in[0,T]$.

\section{Numerical Analysis}
\label{secNumerical}

We now present some numerical results based on the setup above. Let $|q_t^j|\in\{0,1\}$ and assume, in this first scenario, that both agents are ``omitters'' (or, ``stubborn bigots'' of \cite{BrownRogers2012}) who never change their mind and simply execute trades according to the following recurring procedure: (1) Observe signal $\xi_t^j$. (2) Quote signal-based bid and ask prices. (3) Let the central-planner determine $-$ using the pre-announced and legally binding matching rule \eqref{eqMidPrice} $-$ the trade direction, if any, and the transaction price (which are then revealed to the agents). Note that agents execute trades ``without'' learning from each other $-$ who could, otherwise, update their information set $-$ and continue to rely solely on their own information sources.
\begin{figure}
\centering
\includegraphics[scale=0.50]{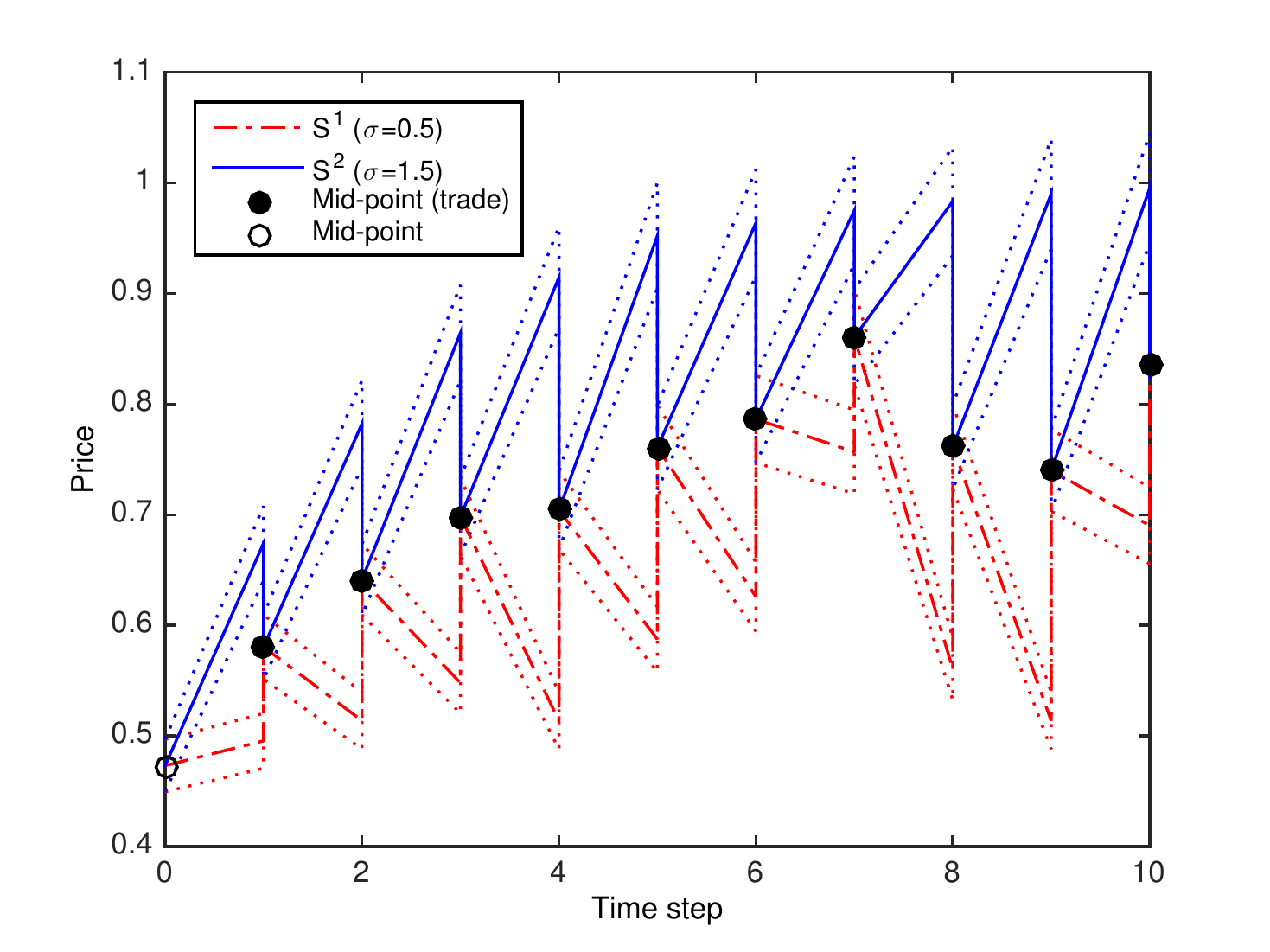}
\caption{Sample evolution of information-based transaction prices in scenario 1 ($|q_t^j|\in\{0,1\}$). Arbitrary parameter values: $T=1$, $\D t = 1/10$, $r=0.05$, $\sigma\in\ls0.5,1.5\rs$, and $\phi(X_T)\in\{0,\bar{1}\}$ (i.e., true value is set to $1$) with $p_0 (x)\in\ls0.5, 0.5\rs$. The dotted lines are bid and ask prices based on $S\varsigma^-$ and $S\varsigma^+$, respectively, with $\varsigma\in\ls 0.95,1.05\rs$.} 
\label{figInteract1}
\end{figure}
\begin{figure}
\centering
\includegraphics[scale=0.50]{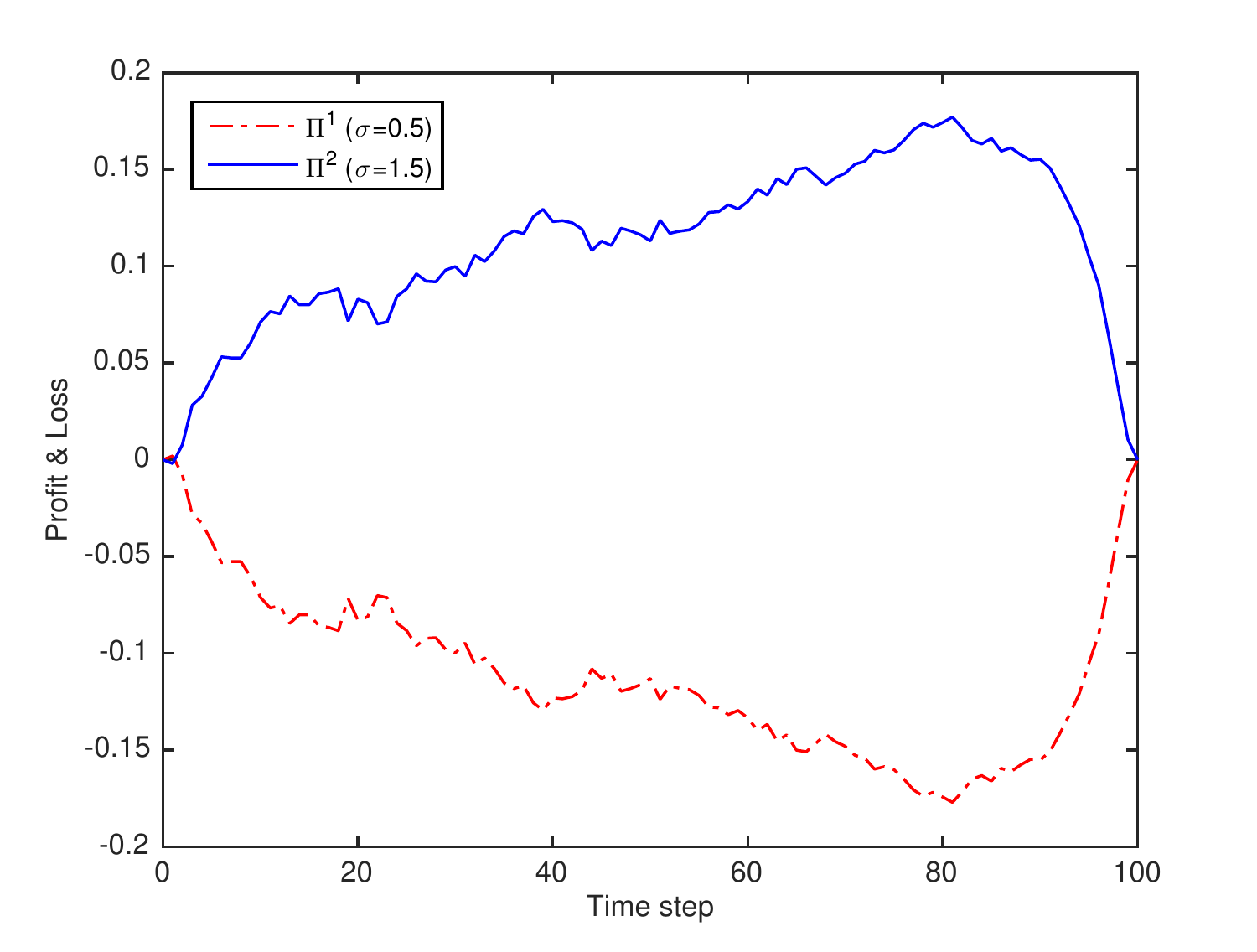}
\caption{Evolution of information-based transaction P\&L averaged over $10^3$ path simulations and based on parameters from Figure \ref{figInteract1}, except that $\D t = 1/100$.} 
\label{figInteract1a}
\end{figure}

In Figure \ref{figInteract1}, where the true fundamental value of $X$ is set to $1$, we illustrate one possible path of such a scenario. Despite a bid-ask margin, occurrence of trade is highly likely in this case as agents do not learn from each other and as personal value judgements diverge. The informationally more (less) susceptible agent, though unknowingly, keeps trading in the right (wrong) direction due to superiority (inferiority) of his signal. Note from Figure \ref{figInteract1} that even after the agent with better signal discovers the asset's true value (around auction $5$), he is still able to execute profitable trades thanks to the matching rule. Figure \ref{figInteract1a}, on the other hand, shows the profit-and-loss (P\&L) results of such a scenario for each time step averaged over $10^3$ simulations, where the number of auctions is increased to $100$. We note at the first glance that the qualitative behaviour of the P\&L agrees with the qualitative behaviour of the magnitude of extra information held (cf. \cite{BrodyDavisFriedmanHughston2009}). %EXPLAIN MORE!..., presented through our information-theoretical analysis in Section \ref{secInfoTheoretic} of Chapter \ref{CH2} of this thesis, as well as that in \cite{BrodyDavisFriedmanHughston2009}.

On an additional note, when multiple ($>$2) agents with various informational capabilities are involved in the market, our numerical results presented in Figure \ref{figInteract1b} suggest that, while P\&L continue to agree with the qualitative behaviour of the magnitude of extra information held by the agent, it is also distributed between agents proportional to the quality of their signal (particularly once the differential informational reaches an adequate level).
\begin{figure}
\centering
\includegraphics[scale=0.50]{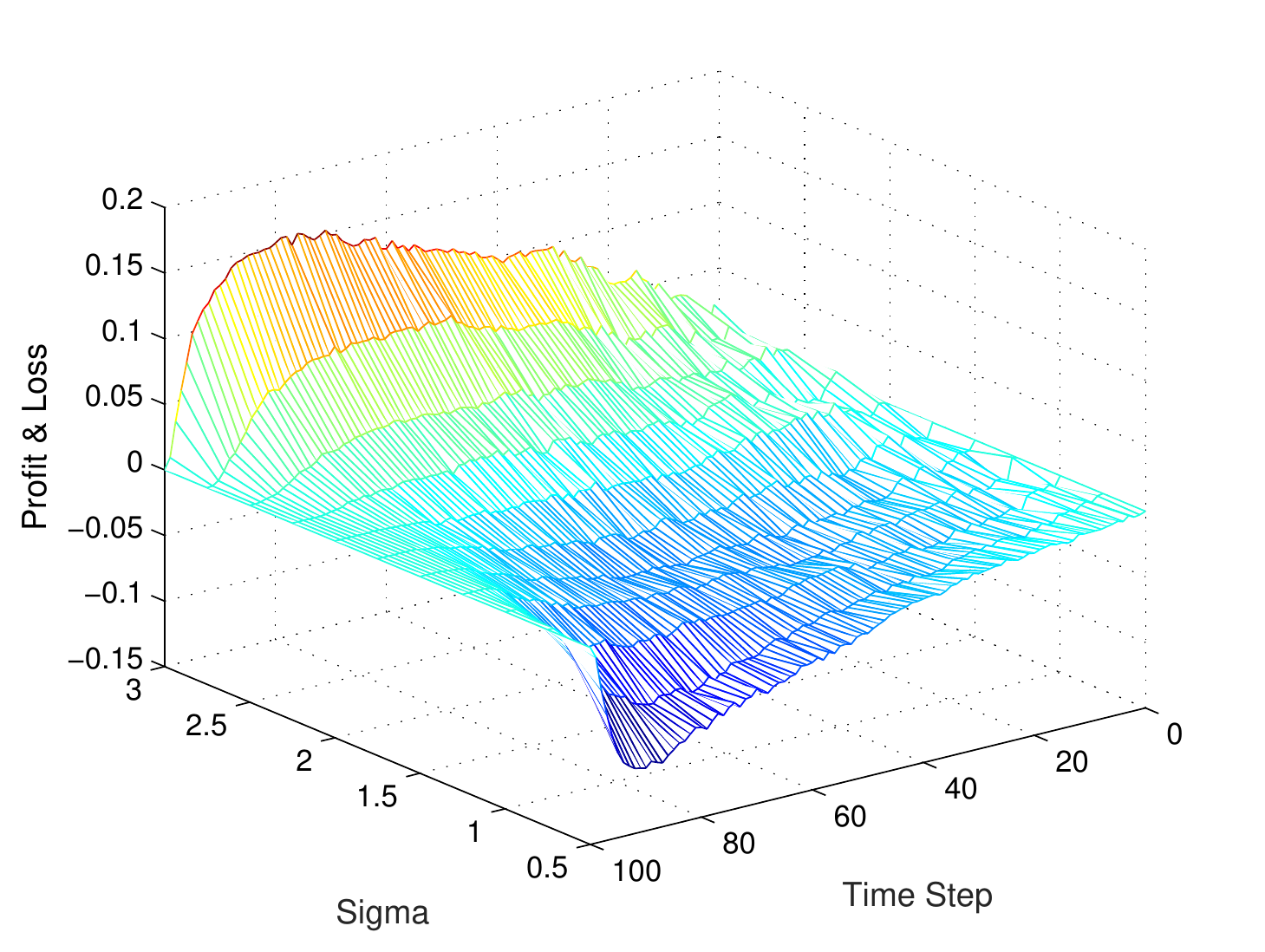}
\caption{Evolution of information-based transaction P\&L of multiple agents averaged over $10^3$ path simulations and based on parameters given in Figure \ref{figInteract1} , except that $\D t = 1/100$.} 
\label{figInteract1b}
\end{figure}

Yet, the exchanges generally do not operate quite this way. A more realistic scenario would be that agents are ``attentive'' and infer their counterpart's posterior $\pi_{s_t}^j (x)$, and, therefore, likelihood $p(\xi_{s_t}^j | x)$, from their price quote at time $s_t$. This would mean having, at any time $t$, partial access to a larger $\s$, e.g., for agent $1$, $\bar{\sigma}(\xi_{t}^1)$, generated by the join of $\sigma(\xi_{t}^1)$ and $\sigma(\xi_{s_t}^2)$, i.e.,
\begin{equation}
\label{eqEffectiveInfo}
\bar{\sigma}(\xi_{t}^1)=\sigma(\xi_{t}^1) \vee \sigma(\xi_{s_t}^2).
\end{equation}
%\footnote{Note that the union of collection of sigma algebras is not always a $\s$ or even an algebra.}

Once agent $1$ gains partial access to $\bar{\sigma}(\xi_{t}^1)$, he updates his posterior from $\pi_{t}^1(x)$ to $\bar{\pi}_{t}^1 (x)$ (by updating $p(\xi_{t}^1 | x)$ to $\bar{p}(\xi_{t}^1 | x)$, i.e., the effective likelihood), which will be again of the form, e.g., for agent 1,
\begin{equation}
\label{eqJointP}
\bar{\pi}_{t}^1 (x) = \frac{p(x) \bar{p}(\xi_{t}^1 | x)}{\displaystyle\int_\mathbb{X} p(x) \bar{p}(\xi_{t}^1 | x) \text{d}x} .
\end{equation}

Note that we intentionally avoid the notations $\sigma(\xi_{t}^1,\xi_{s_t}^2)$ and $p(\xi_{t}^1, \xi_{s_t}^2| x)$ (and use $\bar{\sigma}(\xi_{t}^1)$ and $\bar{p}(\xi_{t}^1 | x)$ instead) so as not to mean that one party's signal is directly observable to the other at the last auction time $s_t$ (which is also not needed).

Thus, before submitting an order at time $t$, having observed a new signal $\xi_t^j$, the agent will need to update his effective information to $\bar{\sigma}(\xi_t^j)=\sigma(\xi_t^1)\vee \sigma(\xi_{s_t}^2)$ (e.g., for agent 1) or $\sigma(\xi_{s_t}^1)\vee \sigma(\xi_t^2)$ (in the case of agent 2). Also, since $\xi_t$ is Markovian, for an agent, partially accessing the signal sample $\xi_{s_t}^j$ of his counterpart will be as valuable as partially accessing his entire signal history $\lr\xi_{s}^{j}\rr_{s\leq s_t}$. Accordingly, right before the auction at time $t$, the `useful' effective likelihood $\bar{p}$ for agent 1 will be
\begin{eqnarray}
\label{eqJointP3}
p(\xi_t^{1},\xi_{s_t}^{2} | x) &=& \frac{1}{2\pi\sqrt{t/\kappa_t}\sqrt{s_t/\kappa_{s_t}}\sqrt{1-\hat{\rho}^2}} \nn\\
&&\cdot \exp\lr-\frac{1}{2}\frac{({s_t}/\kappa_{s_t})(\xi_t^1-\sigma_1 x t)^2}{(1-\hat{\rho}^2)(t/\kappa_t) ({s_t}/\kappa_{s_t})}\rr \nn\\
&&\cdot \exp\lr-\frac{1}{2}\frac{- 2\hat{\rho}(\xi_t^1-\sigma_1 x t)(\xi_{s_t}^2-\sigma_2 x {s_t})\sqrt{(t/\kappa_t)}\sqrt{({s_t}/\kappa_{s_t})}}{(1-\hat{\rho}^2)(t/\kappa_t)({s_t}/\kappa_{s_t})}\rr \nn\\
&&\cdot \exp\lr-\frac{1}{2}\frac{(t/\kappa_t)(\xi_{s_t}^{2}-\sigma_2 x {s_t})^2}{(1-\hat{\rho}^2)(t/\kappa_t) ({s_t}/\kappa_{s_t})}\rr,
\end{eqnarray}
where we used the relation $\beta_{s_t}^1=\rho \beta_{s_t}^2 + \sqrt{1-\rho^2} \bar{\beta}_{s_t}$, with $\beta_{s_t}^2\independent \bar{\beta}_{s_t}$, to find $\hat{\rho}$, i.e., the correlation between $\xi_{t}^{1}$ and $\xi_{{s_t}}^{2}$ conditional on $x$, as follows:
\begin{eqnarray}
\hat{\rho}&=&\frac{\mathbb{C}\text{ov}(\beta_t^1,\beta_{s_t}^2)}{\sigma_{\beta_t^1}\sigma_{\beta_{s_t}^2}}\nn\\
&=&\frac{\mathbb{C}\text{ov}(\rho \beta_t^2 + \sqrt{1-\rho^2} \bar{\beta}_t,\beta_{s_t}^2)}{\sqrt{t/\kappa_t}\sqrt{{s_t}/\kappa_{s_t}}} \nn\\
&=&\rho\frac{{s_t}/\kappa_t}{\sqrt{t/\kappa_t}\sqrt{{s_t}/\kappa_{s_t}}} = \rho \sqrt{\frac{{s_t}}{t}\frac{\kappa_{s_t}}{\kappa_t}},
\end{eqnarray}
with $\rho$ being the correlation between $\xi_{s_t}^{1}$ and $\xi_{{s_t}}^{2}$. We note that $\hat{\rho}$ is a decreasing function of time, as expected, and also that, when $\hat{\rho}=0$, Eq. \eqref{eqJointP3} simply reduces to
\begin{eqnarray}
\label{eqJointP4}
p(\xi_t^{1},\xi_{{s_t}}^{2} | x) &=& \frac{1}{2\pi\sqrt{t/\kappa_t}\sqrt{s/\kappa_{s_t}}} \nn\\
&& \cdot \exp\lr-\frac{1}{2}\frac{({s_t}/\kappa_s)(\xi_t^1-\sigma_1 x t)^2}{(t/\kappa_t) ({s_t}/\kappa_{s_t})}\rr \nn\\
&& \cdot \exp\lr-\frac{1}{2}\frac{(t/\kappa_t)(\xi_{s_t}^{2}-\sigma_2 x {s_t})^2}{(t/\kappa_t) ({s_t}/\kappa_{s_t})}\rr ,
\end{eqnarray}
which also reduces to $p(\xi_t^{1} | x)$ when $s_t=0$ (no trade). The signal-based price of agent $j$, $S_t^j$, is then given by
\begin{equation}
\label{eqOwnPaths}
S_t^j=\E\ls X | \bar{\sigma}(\xi_{t}^j)\rs.
\end{equation}

Accordingly, the new trading procedure is as follows: (1) Observe signal $\xi_t^j$. (1a) Work out $\bar{\sigma}(\xi_t^j)$. (2) Quote signal-based bid and ask prices based on effective information. (3) Let the central-planner do his work (same as (3) above).
\begin{figure}
\centering
\includegraphics[scale=0.50]{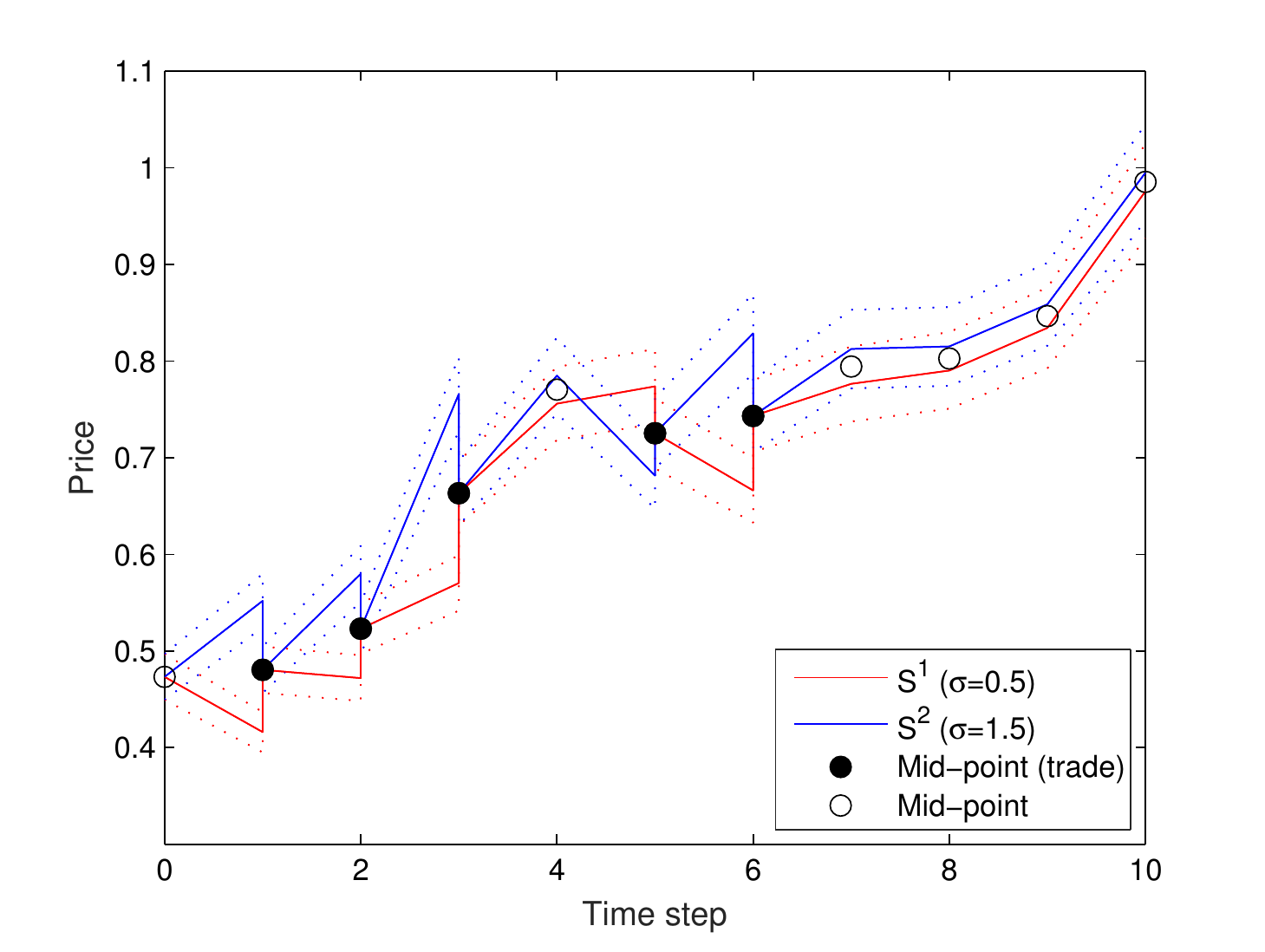}
\caption{Sample evolution of information-based transaction prices along a sample path in scenario 2 ($|q_t^j|\in\{0,1\}$). Arbitrary parameter values: $T=1$, $\D t = 1/10$, $r=0.05$, $\sigma\in\ls0.5,1.5\rs$, and $\phi(X_T)\in\{0,\bar{1}\}$ with $p_0 (x)\in\ls0.5, 0.5\rs$. The dotted lines are bid and ask prices based on $S\varsigma^-$ and $S\varsigma^+$, respectively, with $\varsigma\in\ls 0.95,1.05\rs$.} 
\label{figInteract3}
\end{figure}
\begin{figure}
\centering
\includegraphics[scale=0.50]{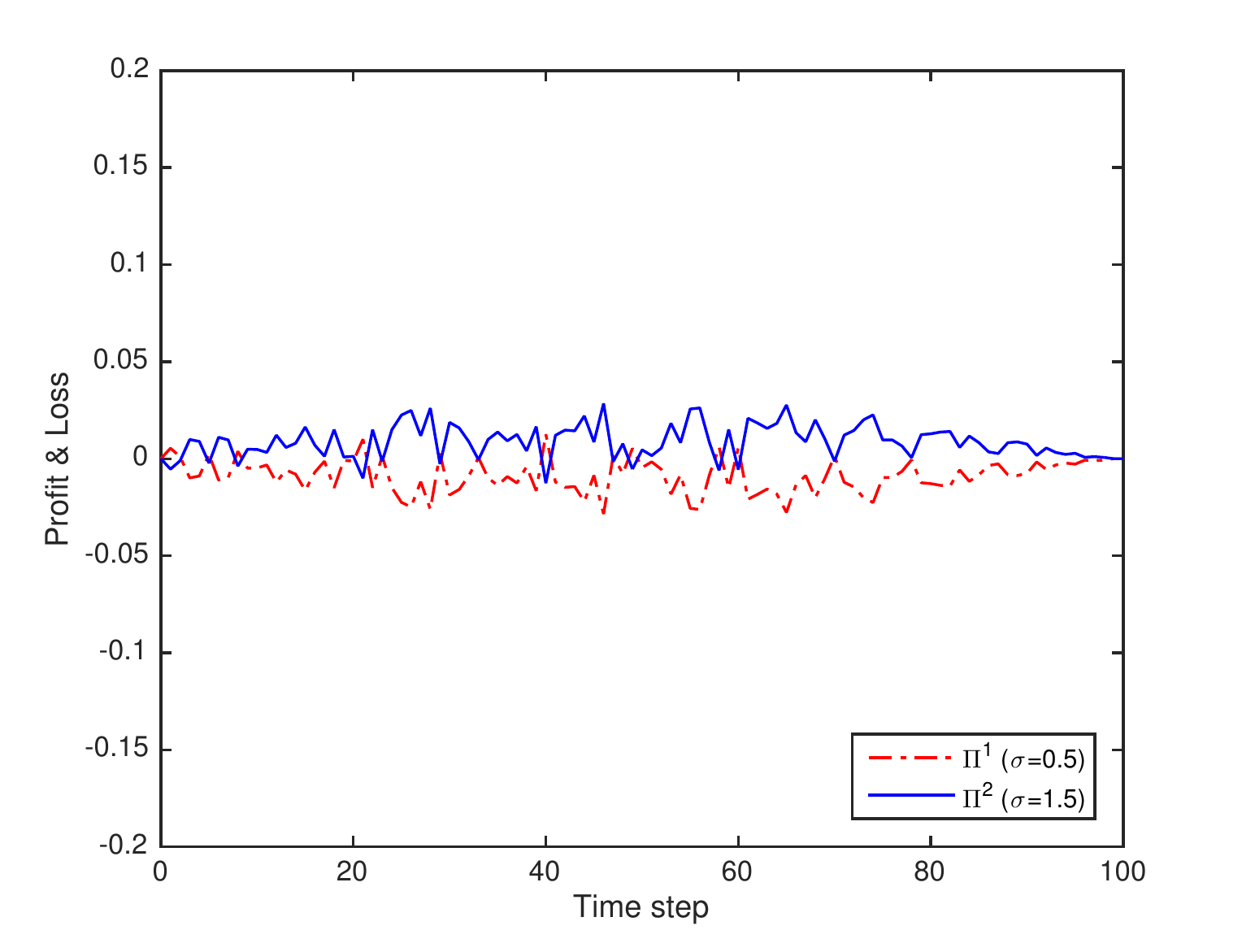}
\caption{Evolution of information-based transaction P\&L averaged over a series of $10^3$ path simulations and based on parameters given in Figure \ref{figInteract3}, except that $\D t = 1/100$.} 
\label{figInteract3a}
\end{figure}

One realisation of this second scenario is depicted in Figure \ref{figInteract3}. At the first glance, learning seems to have decreased profit margins substantially (i.e., to a level where they are often eaten up by the spread, preventing trade). In Figure \ref{figInteract3a}, we again show average stepwise P\&L of agents over $10^3$ realisations. It is apparent from the figure that the informationally more susceptible agent is no more able to extract rents that are as large as in the first scenario (see Figure \ref{figInteract1a}), although he is still able to maintain some modest profits. His ability to maintain modest profits is most likely due to the lag in the learning process as there is still a room for the superior signal to provide the agent receiving it with extra information in-between auctions. The huge difference between the outcomes of two scenarios, i.e., ``omitter'' and ``attentive'', implies that, when each agent deems his own signal superior, there might exist optimal strategies where agents can still be ``attentive'' but, this time, choose which time to reveal their information through trade.
\begin{figure}
\centering
\includegraphics[scale=0.40]{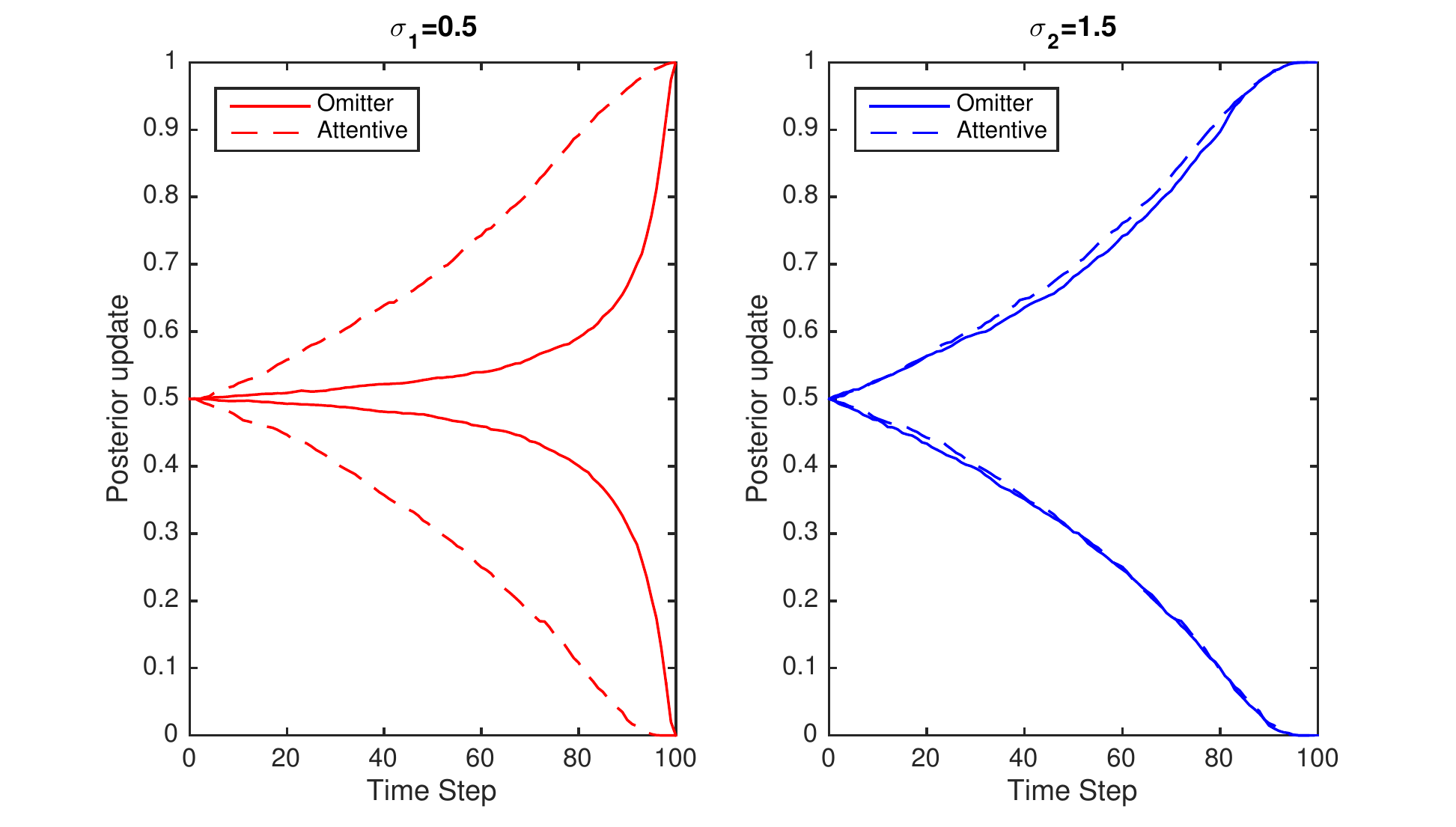}
\caption{Learning process: Bayesian updating of posteriors $\pi_t^j(x)$ averaged over $10^3$ path simulations and based on parameters given in Figure \ref{figInteract3}, except that $\D t = 1/100$.} 
\label{figInteract3b}
\end{figure}
\begin{figure}
\centering
\includegraphics[scale=0.40]{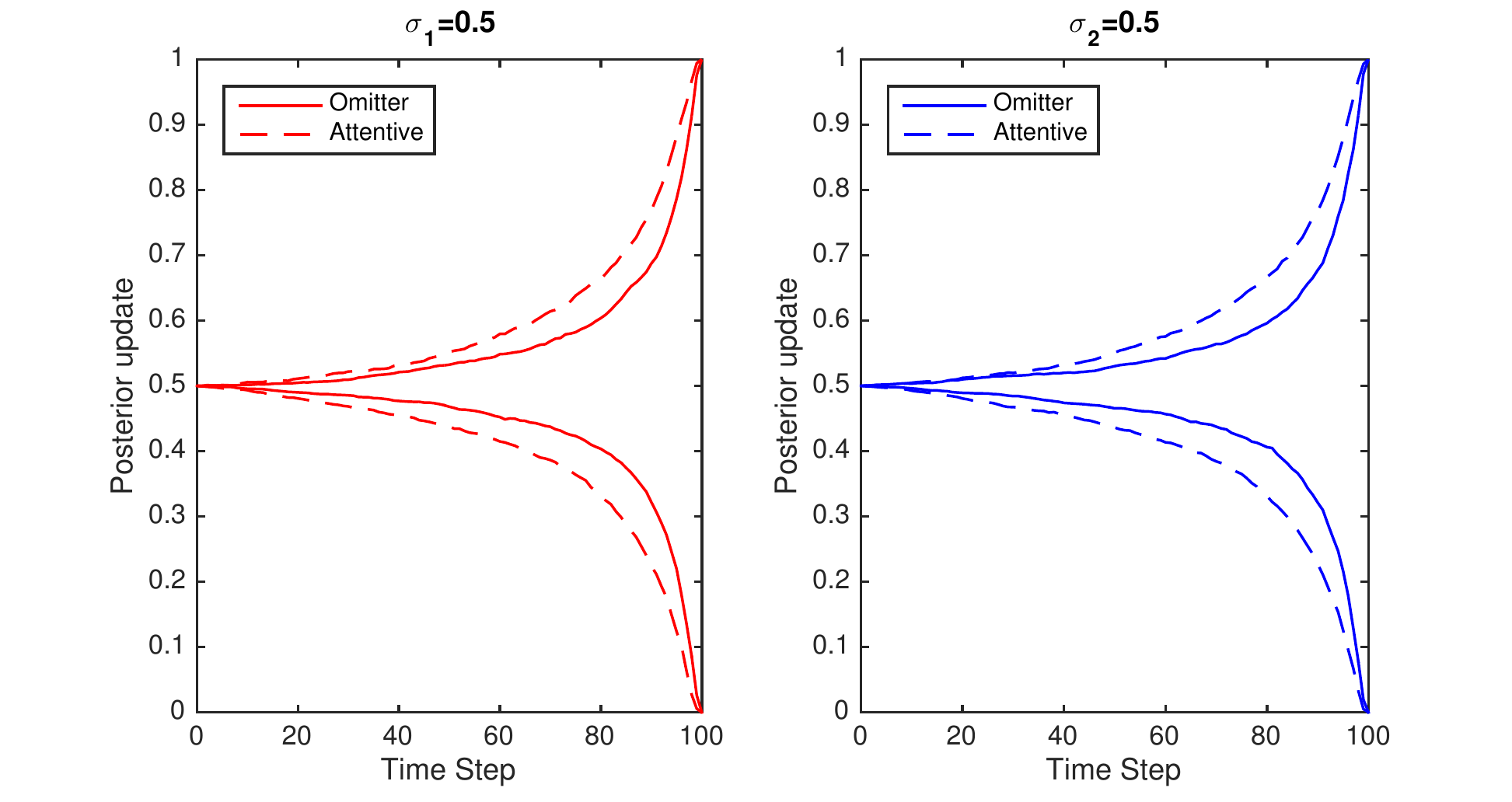}
\caption{Learning process: Bayesian updating of posteriors $\pi_t^j(x)$ averaged over $10^3$ path simulations and based on parameters given in Figure \ref{figInteract3}, except that $\D t = 1/100$.} 
\label{figInteract3c}
\end{figure}

To conclude this section, we compare, in Figures \ref{figInteract3b}-\ref{figInteract3c}, the impact of allowing mutual learning on the speeds at which the two agents discover the true fundamental value of the asset. In the case where the differential between information flow speeds is high (refer to Figure \ref{figInteract3b}), learning seems to work more in favour of the agent with less superior signal with little or no benefit to the agent with a superior signal, whereas, when the differential is minimal (cf. Figure \ref{figInteract3c}), both agents equally benefit from sharing their information via trading.

\section{Signal-based Optimal Strategy}

The P\&L figures provided in Section \ref{secModel} are ex-post, i.e., calculated at the terminal date. In reality, when they trade, agents do so based on their signal-based expectations about the true fundamental value to be revealed at time $T$. They learn whether their earlier trades in futures contracts turned out to be a profit or loss again at time $T$. This, in fact, establishes the main argument which calls for the existence of optimal choices of trading times which maximise their signal-based expected profits: both agents believe that their trades will make them better off (or, there exists `a common knowledge of gains from trade' in the sense of \cite{BondEraslan2010}). Throughout this section, we will regard the agents as `attentive,' and assume $\varsigma^\pm=1$.

\subsection{Characterisation of Expected Profit}

We recall from Section \ref{secNumerical} that, just before the auction at time $t$, the agent $j$ observes the value of his signal and works out his effective information $\bar{\sigma}(\xi_t^j)$ before he makes a judgement of the asset's value. Assuming $X\in\{X^l,X^h\}$ and, again, $|q_t^j|\in\{0,1\}$, the expected (ex-ante) profit of agent $j$ from his possible trade at time $t$ can be decomposed as follows:
\begin{equation}
\label{eqOverallProfitre1}
\E_t^j\ls\Pi_t^j\rs  = P_t^j (\xi_c) \bigr\rvert X^h -\E_t^j \ls S_t^\ast | \xi_c \rs \bigr\rvert - P_t^j (\xi_e) \bigr\rvert X^h -\E_t^j \ls S_t^\ast | \xi_e \rs \bigr\rvert
\end{equation}
with $P_t^j (\xi_c)$ and $P_t^j (\xi_e)$ being the chances of agent $j$ getting correct and erroneous signals at time $t$, respectively. And, again, $\E_t[\cdot]=\E[\cdot|\bar{\sigma}(\xi_t)]$. More formally,
\begin{eqnarray}
\label{eqXicXie1}
P_t^j (\xi_c)&=&P_t^j (H) P_t^j(\xi_c |H) + P_t^j (L) P^j(\xi_c |L) \\
&=&P_t^j (H) P^j(S_t^j>S_t^\ast |H) + P_t^j (L) P^j(S_t^j<S_t^\ast |L), \nn\\
\label{eqXicXie2}
P_t^j (\xi_e)&=&P_t^j (H) P_t^j(\xi_e |H) + P_t^j (L) P^j(\xi_e |L)\\
&=&P_t^j (H) P^j(S_t^j<S_t^\ast |H) + P_t^j (L) P^j(S_t^j>S_t^\ast |L),\nn
\end{eqnarray}
where, again, $H$ and $L$ denote high- and low-type markets in the sense of \cite{Kyle1985}, and
\begin{equation}
\label{eqOverallProfitre1re}
\E_t^j\ls\Pi_t^j\rs  = P_t^j (\xi_c) \bigr\rvert X^h -\E_t^j \ls S_t^\ast | \xi_c \rs \bigr\rvert - P_t^j (\xi_e) \bigr\rvert X^h -\E_t^j \ls S_t^\ast | \xi_e \rs \bigr\rvert
\end{equation}
with $P_t^j (\xi_c)$ and $P_t^j (\xi_e)$ being the chances of agent $j$ getting correct and erroneous signals at time $t$, respectively. And, again, $\E_t[\cdot]=\E[\cdot|\bar{\sigma}(\xi_t)]$. More formally,
\begin{eqnarray}
\label{eqXicXie1a}
P_t^j (\xi_c)&=&P_t^j (H) P_t^j(\xi_c |H) + P_t^j (L) P^j(\xi_c |L) \\
&=&P_t^j (H) P^j(S_t^j>S_t^\ast |H) + P_t^j (L) P^j(S_t^j<S_t^\ast |L), \nn\\
\label{eqXicXie2a}
P_t^j (\xi_e)&=&P_t^j (H) P_t^j(\xi_e |H) + P_t^j (L) P^j(\xi_e |L)\\
&=&P_t^j (H) P^j(S_t^j<S_t^\ast |H) + P_t^j (L) P^j(S_t^j>S_t^\ast |L),\nn
\end{eqnarray}
where, again, $H$ and $L$ denote high- and low-type markets in the sense of \cite{Kyle1985}, and
\begin{eqnarray}
\label{eqXicXie3}
&(\xi_c |H) := (S_t^j>S_t^\ast |H),\quad (\xi_c |L) := (S_t^j<S_t^\ast |L)& \nn\\
&(\xi_e |H) := (S_t^j<S_t^\ast |H),\quad (\xi_e |L) := (S_t^j>S_t^\ast |L).&
\end{eqnarray}
Then, Eq. \eqref{eqOverallProfitre1re} can be written more explicitly as follows:
\begin{eqnarray}
\label{eqOverallProfitre1a}
\E_t^j\ls\Pi_t^j\rs  &=& P_t^j (H) \lr P_t^j(\xi_c |H) \lr X^h -\E_t^j \ls S_t^\ast |H,\xi_c \rs\rr \right.\nn\\
&& \left. - P_t^j(\xi_e |H) \lr X^h -\E_t^j \ls S_t^\ast |H,\xi_e \rs\rr\rr + \nn\\
&&   P_t^j (L) \lr P_t^j(\xi_c |L) \lr \E_t^j \ls S_t^\ast |L,\xi_c \rs -X^l\rr \right. \nn\\
&& \left. -P_t^j(\xi_e |L) \lr \E_t^j \ls S_t^\ast |L,\xi_e \rs -X^l\rr\rr, \nn\\
\end{eqnarray}
where
\begin{eqnarray}
\label{eqHighLowMarket}
P_t^j (H)&=&P_t^j (X^h) = \frac{p_{h} e^{\kappa_t f(t,\sigma_j,\xi^{j},x_{h})}}{\sum_{k\in\{h,l\}} p_{k} e^{\kappa_t f(t,\sigma_j,\xi^{j},x_{k})}},\nn\\
P_t^j (L)&=&P_t^j (X^l) = \frac{p_{l} e^{\kappa_t f(t,\sigma_j,\xi^{j},x_{l})}}{\sum_{k\in\{h,l\}} p_{k} e^{\kappa_t f(t,\sigma_j,\xi^{j},x_{k})}},
\end{eqnarray}
with $f(t,\sigma,\xi,x) :=\sigma\xi_t x - (1/2) \sigma^2 x^2 t$. When the payoff, i.e., $\phi(X)=X$, is continuous, however, Eq. \eqref{eqOverallProfitre1a} implies
\begin{eqnarray}
\label{eqOverallProfitre3}
\E_t^j\ls\Pi_t^j\rs  &=&P_t^j (H) \lr \displaystyle\int_{\mathbb{X}^h} \lr P_t^j(\xi_c |H,x) \lr x-\E_t^j \ls S_t^\ast (x) |H,\xi_c\rs \rr \right. \right. \nn\\
&& \left. \left. - P_t^j(\xi_e|H,x) \lr x- \E_t^j \ls S_t^\ast (x) |H,\xi_e \rs \rr \rr \pi_t^{j+}(x) \text{d}x\rr + \nn\\
&& P_t^j (L) \lr  \displaystyle\int_{\mathbb{X}^l} \lr P_t^j(\xi_c|L,x) \lr \E_t^j \ls S_t^\ast (x) |L,\xi_c)\rs - x \rr \right. \right. \nn\\
&& \left. \left.- P_t^j(\xi_e|L,x) \lr \E_t^j \ls S_t^\ast (x) |L,\xi_e \rs - x \rr \rr \pi_t^{j-}(x) \text{d}x\rr, 
\end{eqnarray}
where $\mathbb{X}^h=(S_0^\ast,X_{\max})$ and $\mathbb{X}^l=(X_{\min}, S_0^\ast)$; $\pi_t^{j+}$ and $\pi_t^{j-}$ are normalised posteriors for high- and low-type markets, respectively; and, at this time,
\begin{equation}
\label{eqHighLowCont}
P_t^j (H) := \displaystyle\int_{\mathbb{X}^h} \pi_t^j(x)\text{d}x,\text{ and }P_t^j (L) := \displaystyle\int_{\mathbb{X}^l} \pi_t^j(x)\text{d}x.
\end{equation}
The notation $S(x)$ is used to denote $\E[X|\xi(x)]$, i.e., the signal-based price of the agent when the actual signal is pinned to the value $x$. In a nutshell, expected profit of the agent is decomposed, through Eq. \eqref{eqOverallProfitre1a} and \eqref{eqOverallProfitre3}, into two components, i.e., whether the agent's signal is pointing at the right (wrong) trade direction and, in that case, what the expected profit (loss) would be.

\subsubsection{Directional Quality of Trading Signal (Digital Dividend)}

Assume, without loss of generality, that $X\in\{x_0,x_1\}$, with $x_0=0$, $x_1>0$ and the prior knowledge of the pair $(p_0,p_1)$. In fact, any binary payoff structure $X\in\{x_0,x_1\}$, $x_1>x_0$ can be simplified as $\{0,x_1-x_0\}$ (a property which will simplify our calculations below). Let the true value of $X$ be $x_1$. Equation \eqref{eqOverallProfitre1a} implies
\begin{eqnarray}
\label{eqEOverallProfit}
\E_t^j\ls\Pi_t^j\rs  &=& P_t^j (x_1)\lr P_t^j(\xi_c|x_1) \lr x_1 -\E_t^j \ls S_t^\ast (x_1) | \xi_c\rs\rr \right.\nn\\
&&\left. - P_t^j(\xi_e|x_1) \lr x_1 -\E_t^j \ls S_t^\ast (x_1) | \xi_e\rs\rr \rr + \nn\\
&& P_t^j (x_0)\lr P_t^j(\xi_c|x_0) \lr \E_t^j \ls S_t^\ast (x_0) | \xi_c \rs - x_0 \rr \right.\nn\\
&&\left. - P_t^j(\xi_e |x_0) \lr \E_t^j \ls S_t^\ast (x_0) | \xi_e \rs - x_0\rr \rr.
\end{eqnarray}
We can calculate the likelihoods of receiving a correct trade signal for agent $1$ when $s_t=s>0$ (i.e., agents did already exchange their information through trading before time $t$) in high- and low-type markets, respectively, as follows:
\begin{eqnarray}
P_t^1(\xi_c |x_1,s) &=& P\lr S_t^1 (x_1,s)>S_t^\ast (x_1,s)\rr = P\lr S_t^1 (x_1,s)/2 > S_t^2 (x_1,s)/2\rr \nn\\
&=& P\lr \frac{\sum_{k=1,2} x_k p_k e^{\kappa_t f(t,\sigma_1,\xi^{1},x_k)+\kappa_s f(s,\sigma_2,\xi^2,x_k)}}{\sum_{k=1,2} p_k e^{\kappa_t f(t,\sigma_1,\xi^{1},x_k)+\kappa_s f(s,\sigma_2,\xi^2,x_k)}} \right.\nn\\
&& > \left. \frac{\sum_{k=1,2} x_k p_k e^{\kappa_t f(t,\sigma_2,\xi^{2},x_k)+\kappa_s f(s,\sigma_1,\xi^1,x_k)}}{\sum_{k=1,2} p_k e^{\kappa_t f(t,\sigma_2,\xi^{2},x_k)+\kappa_s f(s,\sigma_1,\xi^1,x_k)}} \rr .
\end{eqnarray}
A straightforward calculation yields
\begin{eqnarray}
\label{eqP1x1ac}
P_{t}^1(\xi_c |x_1,s) &=& P\lr x_1 p_1 e^{\kappa_t f(t,\sigma_1,\xi^1,x_1)+\kappa_s f(s,\sigma_2,\xi^2,x_1)}/(p_0 +p_1 e^{\kappa_t f(t,\sigma_1,\xi^1,x_1)+\kappa_s f(s,\sigma_2,\xi^2,x_1)})\right.\nn\\
&& \left. > x_1 p_1 e^{\kappa_t f(t,\sigma_2,\xi^2,x_1)+\kappa_s f(s,\sigma_1,\xi^1,x_1)}/( p_0 + p_1 e^{\kappa_t f(t,\sigma_2,\xi^2,x_1)+\kappa_s f(s,\sigma_1,\xi^1,x_1)})\rr \nn\\
%&=& P\lr e^{\kappa_t f(t,\sigma_1,\xi^1,x_1) + \kappa_s f(s,\sigma_2,\xi^2,x_1)} > e^{\kappa_t f(t,\sigma_2,\xi^2,x_1)+\kappa_s f(s,\sigma_1,\xi^1,x_1)} \rr\nn\\ 
%&=& P\lr \kappa_t f(t,\sigma_1,\xi^1,x_1) + \kappa_s f(s,\sigma_2,\xi^2,x_1) \right. \nn\\
%&& \left. > \kappa_t f(t,\sigma_2,\xi^2,x_1)+\kappa_s f(s,\sigma_1,\xi^1,x_1) \rr\nn\\ 
%&=& P\lr \sigma_1\kappa_t\xi_t^1 - \frac{1}{2}\sigma_1^2 x_1 t\kappa_t - \lr \sigma_1\kappa_s\xi_s^1 - \frac{1}{2}\sigma_1^2 x_1 s \kappa_s \rr \right.\nn\\ 
%&&\left. > \sigma_2\kappa_t\xi_t^2 - \frac{1}{2} \sigma_2^2 x_1 t\kappa_t - \lr \sigma_2\kappa_s\xi_s^2 - \frac{1}{2} \sigma_2^2 x_1 s\kappa_s \rr \rr \nn\\
&=& P\lr \sigma_1\lr\kappa_t\xi_t^1 -\kappa_s\xi_s^1\rr - \frac{1}{2}\sigma_1^2 x_1 \lr t\kappa_t - s \kappa_s \rr > \sigma_2 \lr \kappa_t\xi_t^2 -\kappa_s\xi_s^2\rr- \frac{1}{2} \sigma_2^2 x_1 \lr t\kappa_t - s\kappa_s \rr \rr \nn\\
&=& P\lr \sigma_1\lr\kappa_t\lr \sigma_1 t x_1 + \beta_t^1\rr -\kappa_s\lr \sigma_1 s x_1 + \beta_s^1\rr\rr  - \sigma_2 \lr \kappa_t \lr \sigma_2 t x_1 + \beta_t^2\rr -\kappa_s \lr \sigma_2 s x_1 + \beta_s^2\rr\rr \right.\nn\\
&& \left. >  \frac{1}{2} x_1 \lr t\kappa_t - s \kappa_s \rr \lr \sigma_1^2 - \sigma_2^2 \rr\rr \nn\\
&=& P\lr \sigma_1\lr\kappa_t \beta_t^1 -\kappa_s \beta_s^1\rr  - \sigma_2 \lr \kappa_t \beta_t^2 -\kappa_s \beta_s^2\rr >  \frac{1}{2} x_1 \lr t\kappa_t - s \kappa_s \rr \lr \sigma_2^2 - \sigma_1^2 \rr\rr \nn\\
&=& P\lr \lr\sigma_1^2\kappa_t^2 t/\kappa_t -\sigma_1^2 \kappa_s^2 s/\kappa_s\rr^\frac{1}{2}z_1  - \lr\sigma_2^2 \kappa_t^2 t/\kappa_t -\sigma_2^2 \kappa_s^2 s/\kappa_s\rr^\frac{1}{2}z_2 \right. \nn\\ 
&& \left. >  \frac{1}{2} x_1 \lr t\kappa_t - s \kappa_s \rr \lr \sigma_2^2 - \sigma_1^2 \rr\rr \nn\\
&=&\Theta\lr\frac{1}{2}\frac{ x_1 (t\kappa_t - s \kappa_s) \lr \sigma_1^2 - \sigma_2^2 \rr}{\lr\sigma_1^2+\sigma_2^2\rr^{\sfrac{1}{2}}(t\kappa_t - s \kappa_s)^{\sfrac{1}{2}}}\rr\quad{(s,t<T)},\nn\\
\end{eqnarray}
where, again, $\Theta(\cdot)$ is the standard normal cumulative distribution function. The last line simply follows from $z_1\independent z_2$ and $z_{1,2}\sim\mathcal{N}\lr0,1\rr$. Similarly, by arranging the last three lines of Eq. \eqref{eqP1x1ac} and changing the direction of inequality from $>$ to $<$, we can indeed verify that
%\footnote{Note that, in Eq. \eqref{eqP1x1a}, we inherently employ the basic relation $P(x>b-a)=1-\Theta(b-a)=\Theta(a-b)$.}
\begin{equation}
\label{eqP1x0ac}
P_t^1(\xi_c|x_0,s) = P_t^1(\xi_c|x_1,s)
\end{equation}
and, moreover, by virtue of convex combination in Eq. \eqref{eqXicXie1}, that
\begin{equation}
\label{eqP1xics}
P_t^1(\xi_c|s) = P_t^1(\xi_c|x_1,s)=P_t^1(\xi_c|x_0,s).
\end{equation}
Equations \eqref{eqP1x0ac} and \eqref{eqP1x1ac} then directly imply
\begin{equation}
\label{eqP2xics}
P_t^2(\xi_c|s) = 1-P_t^1(\xi_c|s) = P_t^1(\xi_e|s).
\end{equation}
Thus, the chances of agent $j$ getting a correct (or erroneous) signal are the same no matter if the market is bullish or bearish, and one agent's success is the other one's failure, as expected. Note also that $P_{t}^1(\xi_c |x_1,s)$ in Eq. \eqref{eqP1x1ac} is not a function of the value of agent $j$'s specific information at time $t$, i.e., $\bar{\sigma}(\xi_t^j)$, but rather depends only on the differential between information flow speeds, $\sigma_1$ and $\sigma_2$ (or, how agents perceive it), and the spread of $X$.

%, which agree with the analyses in Chapter \ref{CH2} and earlier in this chapter
Eq. \eqref{eqP1x1ac} indeed reveals a number of intuitive properties, such as: (i) the larger the differential $|\sigma_1-\sigma_2|$, the more likely the agent with superior signal will get a correct signal ($\xi_c$), (ii) with $|\sigma_1-\sigma_2|$ given, the agent with superior signal will prefer more uncertainty (i.e., greater spread for $X$) to less uncertainty (i.e., smaller spread for $X$), and, (iii) with $|\sigma_1-\sigma_2|$  and spread of $X$ given, refraining from a trade will always result in greater chances of getting a correct signal (although there will be a cost to refraining).

To complete the case where the contract pays a binary dividend, we state, by virtue of Eq. \eqref{eqEOverallProfit}, the expected `profit-to-go' of agent $j$ at time $t$:
\begin{eqnarray}
\label{eqEOverallProfit2go}
\displaystyle\sum_{u=t}^T \E_t^j \ls \Pi_{uT}^j\rs &=& \displaystyle\sum_{u=t}^T P_t^j (x_1)\lr P_u^j(\xi_c|x_1,s_u) \lr x_1 -\E_t^j \ls S_u^\ast (x_1) | \xi_c\rs\rr \right.\nn\\
&&\left. - P_u^j(\xi_e|x_1) \lr x_1 -\E_t^j \ls S_u^\ast (x_1) | \xi_e\rs\rr \rr + \nn\\
&& \displaystyle\sum_{u=t}^T P_t^j (x_0)\lr P_u^j(\xi_c|x_0) \lr \E_t^j \ls S_u^\ast (x_0) | \xi_c \rs - x_0 \rr \right.\nn\\
&&\left. - P_u^j(\xi_e |x_0) \lr \E_t^j \ls S_u^\ast (x_0) | \xi_e \rs - x_0\rr \rr.
\end{eqnarray}
Note that we preserve the subscript $t$ for $p^j$ and $\E^j$ as they will be inferred based on the effective information at time $t$, i.e., $\bar{\sigma}(\xi_t^j)$.  Below we generalise the above results to the case where $X$ has a continuous distribution.

\subsubsection{Directional Quality of Trading Signal (Gaussian Dividend)}

We first redefine the likelihoods of high- and low-type markets (see Eq. \eqref{eqHighLowCont}):
\begin{equation}
P_t^j (x^+) = \displaystyle\int_{\mathbb{X}^+}\pi_t^j(x)\text{d}x,\quad P_t^j (x^-) = \displaystyle\int_{\mathbb{X}^-}\pi_t^j (x)\text{d}x,
\end{equation}
where $\mathbb{X}^+=(0,\infty)$ and $\mathbb{X}^-=(-\infty,0)$. When $s_t=s$ is non-zero, Eq. \eqref{eqValueGauss} takes the form 
\begin{equation}
S_t^{1}=\mathds{1}_{\left\{t<T\right\}} e^{-r(T-t)}\frac{\sigma_{1} \kappa_t \xi_{t}^{1}+\sigma_{2} \kappa_s \xi_{s}^{2}}{\sigma_{1}^2 \kappa_t t+\sigma_{2}^2 \kappa_s s+1},
\end{equation}
Then, the chances for agent $1$ having right trade signals in high- and low-type markets can be found as follows. Let $X=x$, $x>0$ (high-type). Then,
%\footnote{We use the property that the price process $S_t$ is Gaussian when $X$ has a Gaussian terminal distribution.}
%we find $P_t^1(\xi_c|x^+,s)$ as
\begin{eqnarray}
\label{eqP1xacg}
P_t^1(\xi_c|x^+,s) &=&\displaystyle\int_{\mathbb{X}^+} P^1(S_t^1 (x)>S_t^\ast (x)|s) \bar{\pi}_t^{1+} (x)\text{d}x \nn\\
&=& \displaystyle\int_{\mathbb{X}^+} P^1 (S_t^1 (x) / 2>S_t^2 (x) / 2 |s) \bar{\pi}_t^{1+} (x)\text{d}x \nn\\
&=&\displaystyle\int_{\mathbb{X}^+} P\lr\frac{1}{2}\frac{\sigma_1\kappa_t \xi_t^1 + \sigma_2\kappa_s \xi_s^2}{\sigma_1^2 t \kappa_t+\sigma_2^2 s \kappa_s+1}  - \frac{1}{2}\frac{\sigma_2\kappa_t \xi_t^2 + \sigma_1\kappa_s \xi_s^1}{\sigma_2^2 t \kappa_t+\sigma_1^2 s \kappa_s+1}>0\rr \bar{\pi}_t^{1+} (x)\text{d}x \nn\\
&=&\displaystyle\int_{\mathbb{X}^+} P\lr\frac{1}{2}\frac{\sigma_1\kappa_t (\sigma_1 t x+\beta_t^1)}{\sigma_1^2 t \kappa_t+\sigma_2^2 s \kappa_s+1}  - \frac{1}{2}\frac{\sigma_1\kappa_s (\sigma_1 s x+\beta_s^1)}{\sigma_1^2 s \kappa_s+\sigma_2^2 t \kappa_t+1}\right.\nn\\
&&\left.-\lr\frac{1}{2}\frac{\sigma_2\kappa_t (\sigma_2 t x+\beta_t^2)}{\sigma_2^2 t \kappa_t+\sigma_1^2 s \kappa_s+1}-\frac{1}{2}\frac{\sigma_2\kappa_s (\sigma_2 s x+\beta_s^2)}{\sigma_2^2 s \kappa_s+\sigma_1^2 t \kappa_t+1}\rr>0\rr \bar{\pi}_t^{1+} (x)\text{d}x \nn\\
%&=&\displaystyle\int_{\mathbb{X}^+} P\left\{\frac{1}{2}\ls\lr\frac{\sigma_1\kappa_t}{\sigma_1^2 t \kappa_t+\sigma_2^2 s \kappa_s+1}\rr^2 (t/\kappa_t)  + \lr\frac{\sigma_1\kappa_s}{\sigma_1^2 s \kappa_s+\sigma_2^2 t \kappa_t+1}\rr^2 (s/\kappa_s) \right.\right.\nn\\
%&&\left.\left. -2\lr\frac{\sigma_1\kappa_t}{\sigma_1^2 t \kappa_t+\sigma_2^2 s \kappa_s+1}\rr\lr\frac{\sigma_1\kappa_s}{\sigma_1^2 s \kappa_s+\sigma_2^2 t \kappa_t+1}\rr(s/\kappa_t)\rs^{\frac{1}{2}} z_1\right.\nn\\
%&&\left.-\frac{1}{2}\ls\lr\frac{\sigma_2\kappa_t}{\sigma_2^2 t \kappa_t+\sigma_1^2 s \kappa_s+1}\rr^2 (t/\kappa_t) + \lr\frac{\sigma_2\kappa_s}{\sigma_2^2 s \kappa_s+\sigma_1^2 t \kappa_t+1}\rr^2 (s/\kappa_s)\right.\right.\nn\\
%&&\left.\left. -2\lr\frac{\sigma_2\kappa_t}{\sigma_2^2 t \kappa_t+\sigma_1^2 s \kappa_s+1}\rr\lr\frac{\sigma_2\kappa_s}{\sigma_2^2 s \kappa_s+\sigma_1^2 t \kappa_t+1}\rr(s/\kappa_t)\rs^{\frac{1}{2}} z_2 \right.\nn\\
%&&\left.>\frac{1}{2}\lr\frac{\sigma_2^2 t \kappa_t x}{\sigma_2^2 t \kappa_t+\sigma_1^2 s \kappa_s+1}-\frac{\sigma_2^2 s\kappa_s x}{\sigma_2^2 s \kappa_s+\sigma_1^2 t \kappa_t+1}\rr\right.\nn\\
%&&\left.-\frac{1}{2}\lr\frac{\sigma_1^2 t \kappa_t x}{\sigma_1^2 t \kappa_t+\sigma_2^2 s \kappa_s+1}  - \frac{\sigma_1^2 s \kappa_s x}{\sigma_1^2 s \kappa_s+\sigma_2^2 t \kappa_t+1}\rr\right\}\bar{\pi}_t^{1+} (x)\text{d}x\nn\\
&=&\displaystyle\int_{\mathbb{X}^+} \Theta\lr-\frac{a_t^s}{b_t^s}x\rr\bar{\pi}_t^{1+} (x)\text{d}x\quad(a_t^s, b_t^s, x >0),
\end{eqnarray}
where, again, $z_{1,2}$ are independently $\mathcal{N}(0,1)$, $\bar{\pi}^+$ is the normalised effective posterior density as given in Eq. \eqref{eqJointP},
\begin{eqnarray}
a_t^s &=& \frac{1}{2}\lr\frac{\sigma_2^2 t \kappa_t}{\sigma_2^2 t \kappa_t+\sigma_1^2 s \kappa_s+1}-\frac{\sigma_2^2 s\kappa_s}{\sigma_2^2 s \kappa_s+\sigma_1^2 t \kappa_t+1}\rr\nn\\
&&-\frac{1}{2}\lr\frac{\sigma_1^2 t \kappa_t}{\sigma_1^2 t \kappa_t+\sigma_2^2 s \kappa_s+1}  - \frac{\sigma_1^2 s \kappa_s}{\sigma_1^2 s \kappa_s+\sigma_2^2 t \kappa_t+1}\rr,
\end{eqnarray}
\begin{eqnarray}
b_t^s &=& \frac{1}{2}\ls\lr\frac{\sigma_1\kappa_t}{\sigma_1^2 t \kappa_t+\sigma_2^2 s \kappa_s+1}\rr^2 (t/\kappa_t)  + \lr\frac{\sigma_1\kappa_s}{\sigma_1^2 s \kappa_s+\sigma_2^2 t \kappa_t+1}\rr^2 (s/\kappa_s)\right.\nn\\
&&\left. -2\lr\frac{\sigma_1\kappa_t}{\sigma_1^2 t \kappa_t+\sigma_2^2 s \kappa_s+1}\rr\lr\frac{\sigma_1\kappa_s}{\sigma_1^2 s \kappa_s+\sigma_2^2 t \kappa_t+1}\rr(s/\kappa_t)\right.\nn\\
&&\left.+\lr\frac{\sigma_2\kappa_t}{\sigma_2^2 t \kappa_t+\sigma_1^2 s \kappa_s+1}\rr^2 (t/\kappa_t) + \lr\frac{\sigma_2\kappa_s}{\sigma_2^2 s \kappa_s+\sigma_1^2 t \kappa_t+1}\rr^2 (s/\kappa_s) \right.\nn\\
&&\left. -2\lr\frac{\sigma_2\kappa_t}{\sigma_2^2 t \kappa_t+\sigma_1^2 s \kappa_s+1}\rr\lr\frac{\sigma_2\kappa_s}{\sigma_2^2 s \kappa_s+\sigma_1^2 t \kappa_t+1}\rr(s/\kappa_t)\rs^\frac{1}{2}.
\end{eqnarray}

Indeed, we can quickly verify that $a_t^s>0$ (i.e., $x>y$ implies $x/(x+1)>y/(y+1)$. Moreover, similar to the digital payoff case, it directly follows from \eqref{eqP1xacg} that $P_t^1(\xi_c|x^+,s)=P_t^1(\xi_c|x^-,s)$, $P_t^1(\xi_c|s)=P_t^1(\xi_c|x^+,s)=P_t^1(\xi_c|x^-,s)$, and $P_t^2(\xi_c|s) = 1-P_t^1(\xi_c|s) = P_t^1(\xi_e|s)$. %Note in addition that, for $s=0$, Eq. \eqref{eqP1xac} simplifies to Eq. \eqref{eqP1xa}.

For a low-type market, similarly,
\begin{eqnarray}
P_t^1(\xi_c|x^-,s)&=&\displaystyle\int_{\mathbb{X}^-}\Theta\lr \frac{a_t^s}{b_t^s}x\rr \pi_t^- (x)\text{d}x\quad(a_t^s,b_t^s>0,x<0)\nn\\
&=&\displaystyle\int_{\mathbb{X}^+}\Theta\lr -\frac{a_t^s}{b_t^s}x\rr \pi_t^+ (x)\text{d}x\quad(a_t^s,b_t^s>0,x>0)\nn\\
&=& P_t^1(\xi_c|x^+,s).
\end{eqnarray}

Expected profit-to-go at time $t$ can be inferred in a similar sense to Eq. \eqref{eqEOverallProfit}. As a final step to calculate the signal-based expected profit of then agent at time $t$ (i.e., just before the auction), as given in Eq. \eqref{eqOverallProfitre3}, we now compute the expected transaction price in low- and high-type markets and with correct and erroneous signals. Note that 
\begin{eqnarray}
\label{eqPDifference1}
\E_t^1 \ls S_t^\ast (x) \bigr\rvert x^+,\xi_c,s\rs &=& \sfrac{1}{2}\lr S_t^1(x) + \E_t^1 \ls S_t^2 (x) \bigr\rvert x^+,\xi_c,s \rs\rr \nn\\
&=& S_t^1(x) - \E \ls \sfrac{1}{2}\lr S_t^1 (x) - S_t^2 (x)\rr^+ \bigr\rvert x^+,s\rs,
\end{eqnarray}
\begin{eqnarray}
\label{eqPDifference2}
\E_t^1 \ls S_t^\ast (x) \bigr\rvert x^+,\xi_e,s\rs &=& \sfrac{1}{2}\lr S_t^1(x) + \E_t^1 \ls S_t^2 (x) \bigr\rvert x^+,\xi_e,s \rs\rr \nn\\
&=& S_t^1(x) + \E \ls \sfrac{1}{2}\lr S_t^2 (x) - S_t^1 (x)\rr^+ \bigr\rvert x^+,s\rs,
\end{eqnarray}
\begin{eqnarray}
\label{eqPDifference3}
\E_t^1 \ls S_t^\ast (x) \bigr\rvert x^-,\xi_c\rs &=& \sfrac{1}{2}\lr S_t^1(x) + \E_t^1 \ls S_t^2 (x) \bigr\rvert x^-,\xi_c,s\rs\rr \nn\\
&=& S_t^1(x) + \E \ls \sfrac{1}{2}\lr S_t^2 (x) - S_t^1 (x)\rr^+ \bigr\rvert x^-,s \rs
\end{eqnarray}
and
\begin{eqnarray}
\label{eqPDifference4}
\E_t^1 \ls S_t^\ast (x) \bigr\rvert x^-,\xi_e \rs &=& \sfrac{1}{2}\lr S_t^1(x) + \E_t^1 \ls S_t^2 (x) \bigr\rvert x^-,\xi_e,s\rs\rr \nn\\
&=& S_t^1(x) - \E \ls \sfrac{1}{2}\lr S_t^1 (x) - S_t^2 (x)\rr^+ \bigr\rvert x^-,s \rs.
\end{eqnarray}

Notice that we dropped $t$ and $j$ from $\E_t^j$'s in Eq. \eqref{eqPDifference1}-\eqref{eqPDifference4} since, by Eq. \eqref{eqP1xacg}, when $(\xi_t^j)_{0\leq t\leq T}$ is pinned to a certain value $x$, the price differential is not conditional on the specific value of agent $j$'s signal at time $t$, but rather a function of $\sigma_1$, $\sigma_2$, $t$ and $s$. Thus, all one needs to do (so as to compute the expected transaction price under different market situations and trading signal quality) is to work out the expected value of the `absolute price differential' under each circumstance. To that end, we can infer from Eq. \eqref{eqP1xacg} that 
\begin{eqnarray}
&\lr\sfrac{1}{2}\lr S_t^2 (x)-S_t^1 (x)\rr \bigr\rvert x^+,s\rr \sim \N(a_t^s x,b_t^s),&\nn\\
&\lr\sfrac{1}{2}\lr S_t^2 (x)-S_t^1 (x)\rr  \bigr\rvert x^-,s \rr \sim \N(a_t^s x,b_t^s),&\nn\\
&\lr\sfrac{1}{2}\lr S_t^1 (x)-S_t^2 (x)\rr \bigr\rvert x^+,s\rr \sim \N(-a_t^s x,b_t^s),&\nn\\
&\lr\sfrac{1}{2}\lr S_t^1 (x)-S_t^2 (x)\rr \bigr\rvert x^-,s\rr \sim \N(-a_t^s x,b_t^s),&\nn
\end{eqnarray}
where $a_t^s,b_t^s>0$ are as given above. As a result,
\begin{eqnarray}
\label{eqPDifferenceA}
\E\ls \sfrac{1}{2} \lr S_t^2 (x)-S_t^1 (x)\rr^+\bigr\rvert x^+,s\rs &=&\left\{ \begin{array}{ll}
      \eta_1\frac{1}{\sqrt{2\pi}b_t^s}\displaystyle\int_{0}^{\infty} y e^{-\frac{1}{2}\frac{(y-a_t^s x)^2}{(b_t^s)^2}}\text{d}y, & x>0, \\
      \eta_1\frac{1}{\sqrt{2\pi}b_t^s}\displaystyle\int_{0}^{\infty} y e^{-\frac{1}{2}\frac{(y+a_t^s x)^2}{(b_t^s)^2}}\text{d}y, & x<0,
\end{array} 
\right.\nn\\
&=& \E \ls \sfrac{1}{2} \lr S_t^1 (x)-S_t^2 (x)\rr^+ \bigr\rvert x^-,s\rs, \nn\\
\end{eqnarray}
and
\begin{eqnarray}
\label{eqPDifferenceB}
\E\ls \sfrac{1}{2} \lr S_t^1 (x)-S_t^2 (x) \rr^+ \bigr\rvert x^+,s \rs &=&\left\{ \begin{array}{ll}
      \eta_2\frac{1}{\sqrt{2\pi}b_t^s}\displaystyle\int_{0}^{\infty} y e^{-\frac{1}{2}\frac{(y + a_t^s x)^2}{(b_t^s)^2}}\text{d}y, & x>0, \\
      \eta_2\frac{1}{\sqrt{2\pi}b_t^s}\displaystyle\int_{0}^{\infty} y e^{-\frac{1}{2}\frac{(y - a_t^s x)^2}{(b_t^s)^2}}\text{d}y, & x<0,
\end{array} 
\right.\nn\\
&=& \E \ls \sfrac{1}{2} \lr S_t^2 (x)-S_t^1 (x)\rr^+ \bigr\rvert x^-,s\rs, \nn\\
\end{eqnarray}
with density normalising factors $\eta_1$, $\eta_2$. Thus, under the Gaussian payoff scenario, we have derived explicit formulae for the two main sources of uncertainty involved in signal-based trade, namely, the likelihood of a signal's pointing at the right (wrong) trade direction, and the expected amount of profit (loss) given the signal was correct (erroneous). With inputs from Eq. \eqref{eqP1xacg} and \eqref{eqPDifference1}-\eqref{eqPDifference4}, Eq. \eqref{eqOverallProfitre3} can now be written for agent $1$ as
\begin{eqnarray}
\label{eqOverallProfitre2}
\E_t^1\ls \Pi_t^{1} \rs &=&P_t^1 (x^+) \lr \displaystyle\int_{\mathbb{X}^+} \lr\Theta\lr-\frac{a_t^s}{b_t^s}x\rr \eta_1 \frac{1}{\sqrt{2\pi}b_t^s}\displaystyle\int_{0}^{\infty} y e^{-\frac{1}{2}\frac{(y+a_t^s x)^2}{(b_t^s)^2}}\text{d}y \right.\right. \nn\\
&& \left.\left. - \Theta\lr\frac{a_t^s}{b_t^s}x\rr \eta_2 \frac{1}{\sqrt{2\pi}b_t^s}\displaystyle\int_{0}^{\infty} y e^{-\frac{1}{2}\frac{(y - a_t^s x)^2}{(b_t^s)^2}}\text{d}y \rr \bar{\pi}_t^{1+}(x) \text{d}x\rr + \nn\\
&& P_t^1 (x^-) \lr  \displaystyle\int_{\mathbb{X}^-} \lr \Theta\lr\frac{a_t^s}{b_t^s}x\rr \eta_1 \frac{1}{\sqrt{2\pi}b_t^s}\displaystyle\int_{0}^{\infty} y e^{-\frac{1}{2}\frac{(y-a_t^s x)^2}{(b_t^s)^2}}\text{d}y \right. \right. \nn\\
&& \left. \left.- \Theta\lr-\frac{a_t^s}{b_t^s}x\rr \eta_2 \frac{1}{\sqrt{2\pi}b_t^s}\displaystyle\int_{0}^{\infty} y e^{-\frac{1}{2}\frac{(y+a_t^s x)^2}{(b_t^s)^2}}\text{d}y \rr \bar{\pi}_t^{1-}(x) \text{d}x\rr, \nn\\
\end{eqnarray}
where $\bar{\pi}^{+}$, $\bar{\pi}^{-}$ are, again, normalised effective posteriors for high- and low-type markets, with $\mathbb{X}^+=(0,\infty)$ and $\mathbb{X}^-=(-\infty,0)$. Equation \eqref{eqOverallProfitre2} can also be written for agent $2$ without much effort. Accordingly, agent $j$ updates his trading procedure as follows: (1)-(1a) Same as in Section \ref{secNumerical} above. (1b) Calculate $\E_t^j [\Pi_t^j]$ based on Eq. \eqref{eqOverallProfitre2}. (1c) Decide whether to quote or not to quote a price. If yes, proceed to next step. (2) Quote signal-based price (as $\varsigma^\pm=1$). (3) Same as in Section \ref{secNumerical} above. 

Now, equipped with the flexibility to shape his strategy $(q_t^j)_{0\leq t\leq T}$, $|q_t^j|\in\{0,1\}$, by timing his trades, agent $j$ will need to develop an optimal `online' trading rule (referring to (1c) above) that maximises his profits, in understanding of his marginal benefits and losses from seizing or skipping a trade opportunity.

\subsection{Risk-neutral Optimal Strategy}

It is not difficult to see that $\E_t^1\ls \Pi_t^{1} \rs$ in Eq. \eqref{eqOverallProfitre2} will always be negative when agent knows that his information is less superior. The top and bottom left panels of Figure \ref{figAllPossibleStrategiesEPi12_2sigma_x05_c}, in this regard, depicts the evolution in time of $\E_t^1 \ls \Pi_t^1 \rs$, based on almost all possible strategies and a sample path of $\xi_t^1$, when an agent believes that he is informationally less susceptible than his counterpart. Yet, the agent can minimise his chances of losing from a trade by keeping his information up-to-date through trading at ``each'' time step (i.e., the top edge of each shape in the left panel). We note that the marginal expected cost of refraining from trade for the less susceptible agent is always positive when the agent believes he is informationally less susceptible.

We therefore infer from Figure \ref{figAllPossibleStrategiesEPi12_2sigma_x05_c} that a solution to the maximisation problem in Eq. \eqref{eqMaxRiskNeutral} is unattainable from the perspective of a less informationally capable agent. The real-world implication of this is that market shutdowns may not occur in a real market setting because investors think their effective information are either constantly or temporarily superior to the market information. We, thus, turn our focus to the case where both agents believe their information source is characterised by a higher $\sigma$.

The top and bottom right panels of Figure \ref{figAllPossibleStrategiesEPi12_2sigma_x05_c}, on the other hand, shows the evolution in time of $\E_t^j \ls \Pi_t^j \rs$ for the agent who believes that he is informationally more susceptible. The strategy which results in the bottom edge these shapes on the right panel is unique, i.e., $|q_t^j|=1$ $\forall t$. However, there is no single strategy which can achieve the top edge of each shape, which is a combination of different strategies that result in the maximum expected potential at different time points.

\begin{figure}[h]
\centering
\includegraphics[scale=0.50]{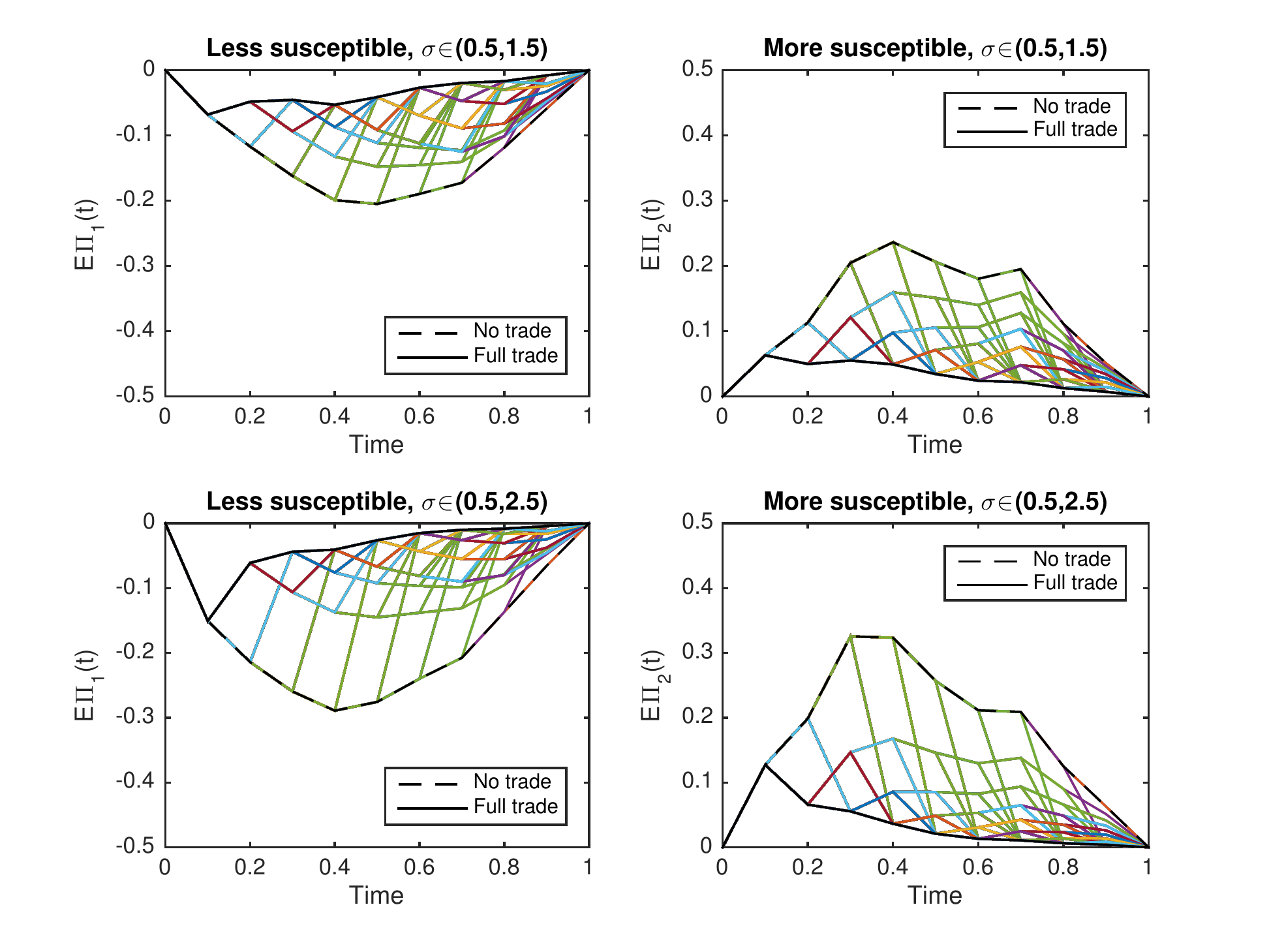}
\caption{Evolution of $\E_t^j [ \Pi_t^j ]$ as given in Eq. \eqref{eqOverallProfitre2} for sample trajectories of $\xi_t^1$ and $\xi_t^2$ and all possible trading strategies. The dividend is assumed to be Gaussian.} 
\label{figAllPossibleStrategiesEPi12_2sigma_x05_c}
\end{figure}

We define the optimal strategy of an agent as the one which maximises his overall expected terminal profit from trading the contract based on his effective information $\bar{\sigma}(\xi_t^j)$, i.e., for agent 1,
\begin{eqnarray}
\label{eqMaxRiskNeutral}
&\arg\max_{(q_t^1)} &\displaystyle\sum_{t=0}^T\E_t^1\ls \Pi_t^{1}\rs =  \displaystyle\sum_{t=0}^T\E \ls \Pi_t^{1} | \bar{\sigma}(\xi_t^1) \rs \nn\\ 
&\text{s.t.}& S_t^\ast = \sfrac{1}{2}(S_t^1+S_t^2), \\
&&q_t=0\nn\\
&&\forall t, \nn
\end{eqnarray}
where $\bar{\sigma}(\xi_t^1)$ is same as in Eq. \eqref{eqEffectiveInfo}.

On an extra note, we remark that setting a mid-price, as in Eq. \eqref{eqMidPrice} and \eqref{eqMaxRiskNeutral}, is indeed equivalent to
\begin{equation}
\E\ls q_t^1 (X-S_t^\ast) | \bar{\sigma}(\xi_t^1) \rs = \E\ls q_t^{2} (X-S_t^\ast) | \bar{\sigma}(\xi_t^2) \rs .
\end{equation}

Thus, we can reinterpret the role of the central planner, in the context of this section, as `to observe $\bar{\xi}_t^j$'s through price quotes and set the transaction price as the mid-price which equates the signal-based terminal profits of agents.'

Similar to \cite{BussDumas2013}, we can define the dynamic programming formulation of the agent $j$'s problem given in Eq. \eqref{eqMaxRiskNeutral} as follows:
\begin{equation}
\label{eqValueFunction}
V_t^j = \sup_{(q_t^j)} \lr \E_t^{j}\ls \Pi_t^{j} \rs + \E_t^j \ls V_{t+1}^j \rs \rr,
\end{equation}
where $V_t^j$ is the value function. Note that $\E_t^{j}[ \Pi_t^{j} ]$ is implicitly determined by $(q_s^j)_{0\leq s <t}$, whereas $\E_t^j [ V_{t+1}^j ]$ by $(q_s^j)_{0\leq s \leq t}$. Therefore, at each auction, the agent will need to consider the marginal impact of his current strategy on $\E_t^j [ V_{t+1}^j ]$. The particular nature of the present model, however, does not allow us to employ a backward-induction technique that is similar to the one described in \cite{BussDumas2013} and \cite{DumasLyasoff2012}.

Based on Eq. \eqref{eqValueFunction}, we introduce the following real-time optimal trading strategy for agent $j$:
\begin{equation}
\label{eqDecision}
|q_t^j| = \left\{ \begin{array}{ll}
      1, &\text{if}\quad \E_t^{j}\ls \Pi_t^{j} \rs >0, \\
	  &\text{and}\quad\E_t^{j}\ls \Pi_t^{j} \rs + \E_t^{j}\ls \Pi_{t+1}^{j}\rs_{|q_t^j| =1} > \E_t^{j}\ls \Pi_{t+1}^{j} \rs_{q_t^j =0}, \\          
      0, & \text{otherwise},
\end{array} 
\right.
\end{equation}
where the term
\begin{equation}
\label{eqDecision2}
\E_t^{j}[ \Pi_t^{j} ] - \lr \E_t^{j}[\Pi_{t+1}^{j}]_{q_t^j =0} - \E_t^{j}[\Pi_{t+1}^{j}]_{|q_t^j| =1} \rr
\end{equation}
can be seen as the immediate expected gain from trade adjusted for the cost of losing the informational advantage. Thus, the agent chooses to trade whenever his cost-adjusted expected gain from trade is strictly positive. In other words, whenever an agent refrains from trade in expectation of greater future profits, he should do so on the basis that he has to recover immediate cost of refraining.

Figure \ref{figDecisionRuleValueEPi2_sigma0510_c} plots the value of \eqref{eqDecision2} (averaged over a number of sample paths of $\xi_t^j$) for the more informationally susceptible agent for each point $t$ in the trading horizon and given each possible trading history $s_t$. The decision rule variable is positive for any possible past strategy characterised by the last time of trade, $q_{s_t}^j$, implying that the agent can maximise the sum of his expected terminal profits by trading at each time point $t$, thereby constantly incorporating his differential information into prices, i.e.,  $|q_t^j|=1$ $\forall t\in[0,T]$.%The concavity of the decision rule value depicted in Figure \ref{figDecisionRuleValueEPi2_sigma0510_c} for any given $s_t$ implies that refraining from trade even for longer period will not yield an expected profit that is enough to cover the cost of not trading.

\begin{figure}[h]
\centering
\includegraphics[scale=0.40]{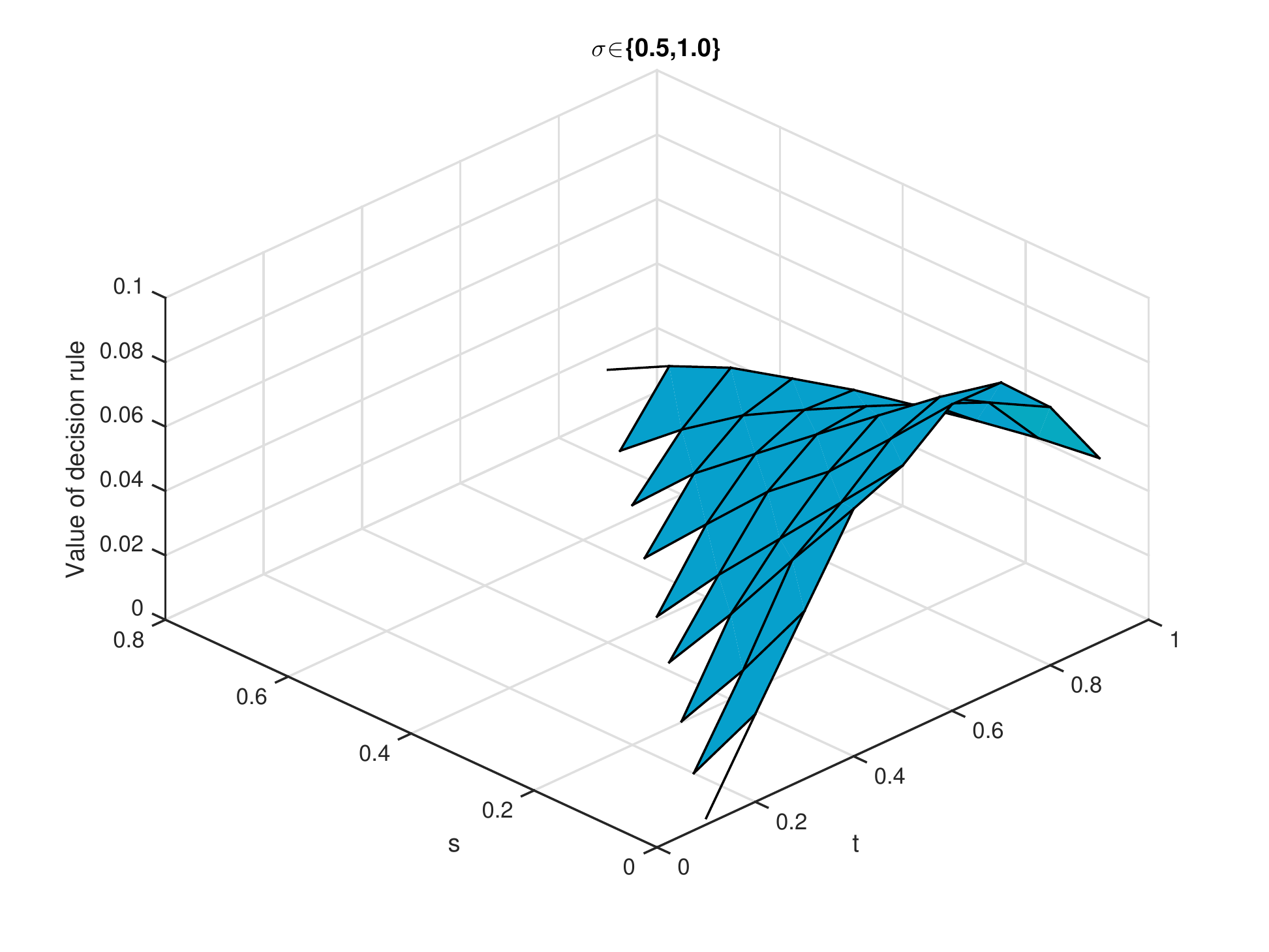}
\caption{Value of cost-adjusted expected gain from trade (averaged over a number of sample paths of $\xi_t^j$) for the more informationally susceptible agent based on Eq. \eqref{eqDecision} for all $t$ and $s_t$.} 
\label{figDecisionRuleValueEPi2_sigma0510_c}
\end{figure}

We can indeed show that the optimality of this strategy is invariant to the path of $\xi_t^j$. Consider agent $2$ who deems his signal superior ($\sigma_2>\sigma_1$) and let the market be high-type (i.e., $x\in\mathbb{X}^+$). For given $x$, $\sigma_1$, $\sigma_2$, let's denote the corresponding integrand in Eq. \eqref{eqOverallProfitre2}, rearranged for agent $2$, by $H_2(t,s)$, where
\begin{eqnarray}
\label{eqIntegrand}
H_2 (t,s)&=&\Theta\lr\frac{a_t^s}{b_t^s}x\rr \eta_1 \frac{1}{\sqrt{2\pi}b_t^s}\displaystyle\int_{0}^{\infty} y e^{-\frac{1}{2}\frac{(y-a_t^s x)^2}{(b_t^s)^2}}\text{d}y\nn\\
&&-\Theta\lr-\frac{a_t^s}{b_t^s}x\rr \eta_1 \frac{1}{\sqrt{2\pi}b_t^s}\displaystyle\int_{0}^{\infty} y e^{-\frac{1}{2}\frac{(y+a_t^s x)^2}{(b_t^s)^2}}\text{d}y.
\end{eqnarray}

Note that $H_2(t,s)$ in Eq. \eqref{eqIntegrand} is the expected profit of agent 2 at time $t$ given the time of last trade, $s$, and a high-type payoff $x$, and it is not a function of $\xi_t^2$. Agent 2 version of Eq. \eqref{eqOverallProfitre2} is, in fact, nothing but the sum of convex combinations of $H_2(t,s)$ and its low-market analogous $L_2(t,s)$ with respect to the effective posteriors $\bar{\pi}_t^{2+}$ and $\bar{\pi}_t^{2-}$, respectively. Thus, similar to the relation \eqref{eqDecision2},
\begin{equation}
H_2(t,s)-\ls H_2(t+1,s) - H_2(t+1,t) \rs
\end{equation}
can be seen as the signal-independent version of the cost-adjusted immediate gain from trade (for the agent who deems his signal superior), whose value is depicted in Figure \ref{figDecisionRuleValueEPi2_sigma0510_c2}. It can be inferred from the figure, again, that it is optimal for the informationally more susceptible agent to trade continuously without accumulating his extra information.

\begin{figure}[h]
\centering
\includegraphics[scale=0.40]{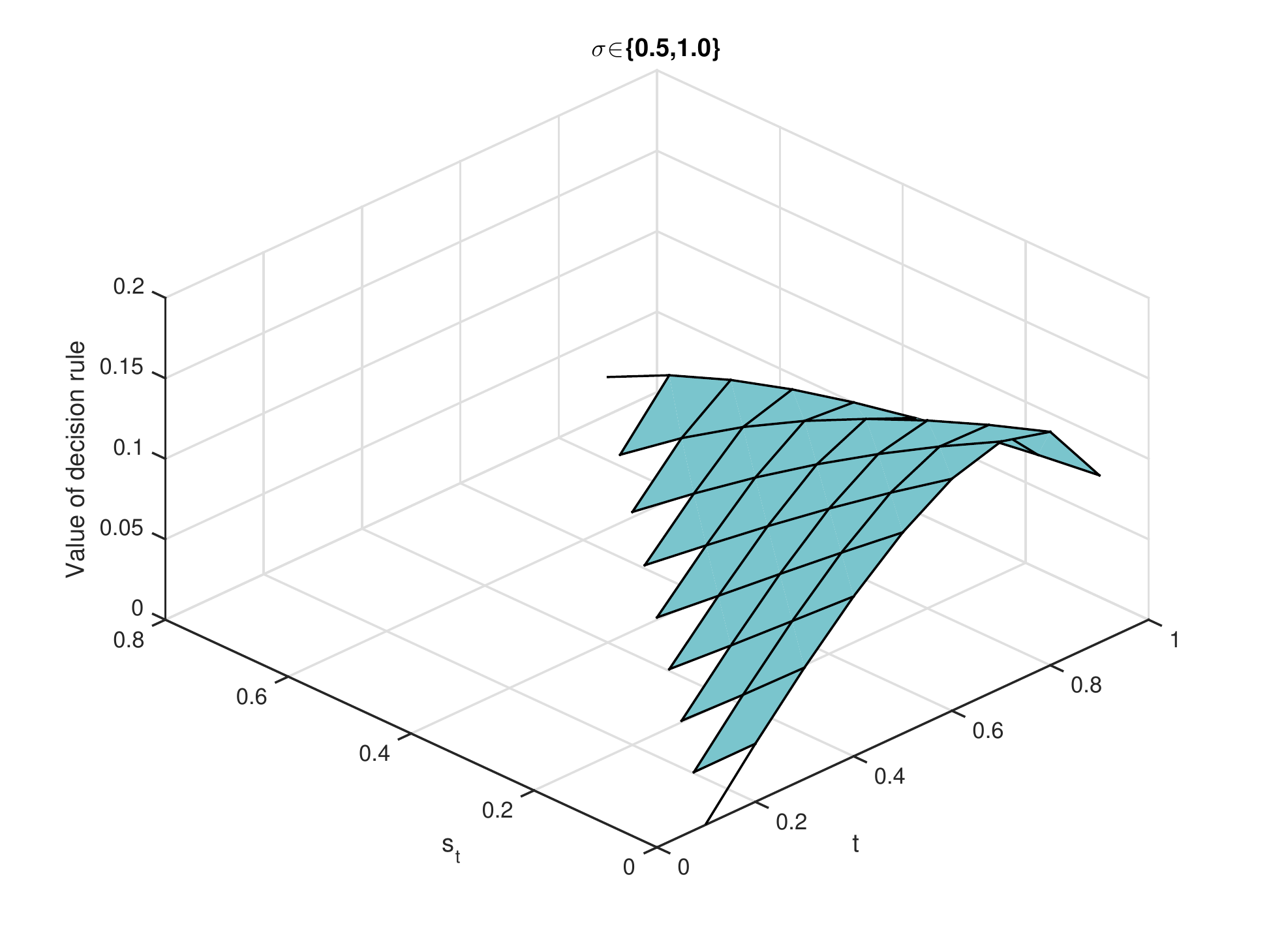}
\caption{Value of signal-independent cost-adjusted expected gain from trade for the more informationally susceptible agent based on Eq. \eqref{eqDecision} for all $t$ and $s_t$.} 
\label{figDecisionRuleValueEPi2_sigma0510_c2}
\end{figure}

%\begin{proposition}
%Concave
%\end{proposition}

\subsection{Extension to Risk-adjusted Performance}

In case agents are risk-adjusted expected profit (e.g., Sharpe ratio, \cite{Sharpe1966}) maximisers ``at the portfolio level'', the objective function in Eq. \eqref{eqMaxRiskNeutral} can simply be modified as
\begin{equation}
\label{eqMaxRiskAdjusted}
\arg\max_{(q_t^j)} \frac{\displaystyle\sum_{t=0}^T q_t^j \E_t^j\ls \Pi_t^{j}\rs}{\lr \displaystyle\sum_{t=0}^T (q_t^j)^2 \V_t^j\lr \Pi_t^{j}\rr \rr^{\sfrac{1}{2}}}. 
\end{equation}

We then write the conditional variance $\V_t^j(\Pi_t^{j})=\V^j(\Pi_t^{j}|\bar{\sigma}(\xi_t^j))$ of the signal-based profit at time $t$, whose expectation is given in Eq. \eqref{eqOverallProfitre2} as
%, using ``law of total variance'': $\V[\Pi | \xi] = \E[\E[\Pi^2 | \xi,X]] - \E[\E[\Pi | \xi,X]]^2$,
\begin{eqnarray}
\label{eqOverallProfitVarre2}
\V_t^1( \Pi_t^{1} ) &=&p_t^1 (x^+) \lr \displaystyle\int_{\mathbb{X}^+} \lr\Theta\lr-\frac{a_t^s}{b_t^s}x\rr^2 \eta_1 \frac{1}{\sqrt{2\pi}b_t^s}\displaystyle\int_{0}^{\infty} y^2 e^{-\frac{1}{2}\frac{(y+a_t^s x)^2}{(b_t^s)^2}}\text{d}y \right.\right. \nn\\
&& \left.\left. - \Theta\lr\frac{a_t^s}{b_t^s}x\rr^2 \eta_2 \frac{1}{\sqrt{2\pi}b_t^s}\displaystyle\int_{0}^{\infty} y^2 e^{-\frac{1}{2}\frac{(y - a_t^s x)^2}{(b_t^s)^2}}\text{d}y \rr \bar{\pi}_t^{1+}(x) \text{d}x\rr + \nn\\
&& p_t^1 (x^-) \lr  \displaystyle\int_{\mathbb{X}^-} \lr \Theta\lr\frac{a_t^s}{b_t^s}x\rr^2 \eta_1 \frac{1}{\sqrt{2\pi}b_t^s}\displaystyle\int_{0}^{\infty} y^2 e^{-\frac{1}{2}\frac{(y-a_t^s x)^2}{(b_t^s)^2}}\text{d}y \right. \right. \nn\\
&& \left. \left.- \Theta\lr-\frac{a_t^s}{b_t^s}x\rr^2 \eta_2 \frac{1}{\sqrt{2\pi}b_t^s}\displaystyle\int_{0}^{\infty} y^2 e^{-\frac{1}{2}\frac{(y+a_t^s x)^2}{(b_t^s)^2}}\text{d}y \rr \bar{\pi}_t^{1-}(x) \text{d}x \rr \nn\\
&& - \lr\E_t^1[\Pi_t^{1}]\rr^2. 
\end{eqnarray}

Figure \ref{figAllPossibleStrategiesSHRP12_2sigma_x05_c} depicts the risk-adjusted version of Figure \ref{figAllPossibleStrategiesEPi12_2sigma_x05_c} using the same signal sample as in the latter.

\begin{figure}[h]
\centering
\includegraphics[scale=0.50]{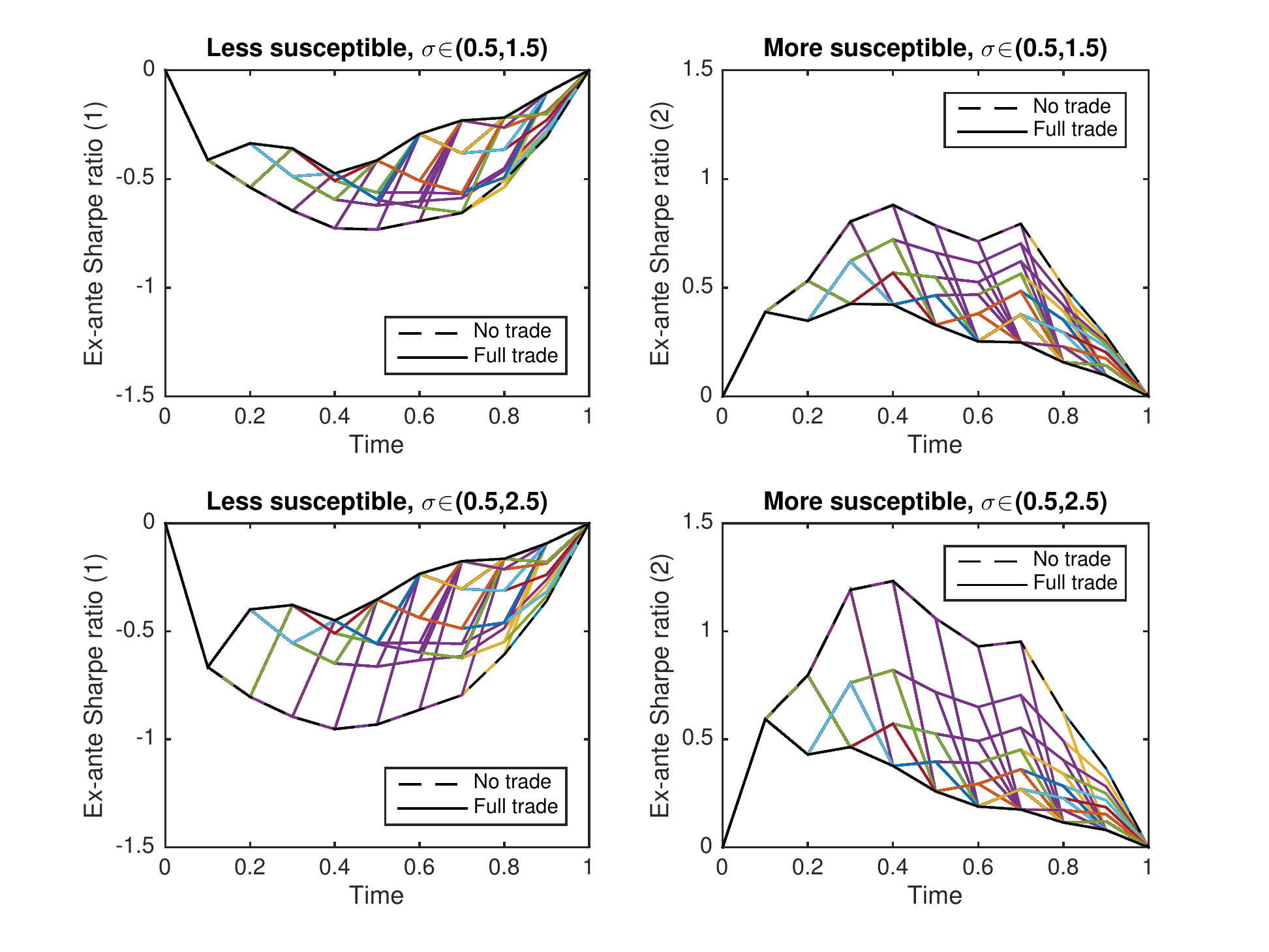}
\caption{Evolution of Sharpe ratio based on $\V_t^j ( \Pi_t^j )$ given in Eq. \eqref{eqOverallProfitVarre2} for sample trajectories of $\xi_t^1$ and $\xi_t^2$ and all possible trading strategies. The dividend is assumed to be Gaussian.} 
\label{figAllPossibleStrategiesSHRP12_2sigma_x05_c}
\end{figure}

\subsection{Extension to Risk-averse Utility}

The above setup can easily be generalised to the case where agents are `characteristically' risk-averse and attach decreasing marginal utility to each extra unit of expected return due to the additional risk involved. In \cite{BondEraslan2010}, the authors show that the following two cases are equivalent: (a) terminal payoff is exogenous (as in our case) and agents are risk-averse, (b) dividend is endogenous and agents are risk-neutral. When the asset dividend (or terminal payoff) is exogenous and agents' actions have no impact on it, one needs to introduce either trade quotas or proper risk aversion assumptions to prevent agents from trading unlimited amounts to make infinite profits, should quoted prices be in their favour. In the presence of informational differences, there would be less or no motivation for agents who are not only informationally less capable but also risk averse to actively participate in a market where the participants are assumed to be rational. Such state of affairs can, in fact, exacerbate the situations where markets shut down due to perceived differential information. Such situations are avoided in the literature by introducing the concept of `noise-traders', which we avoid in the present context so as to focus solely on the influence of differential information on market phenomena.

We assume that agents are risk-averse with the utility assigned to a sure dividend $x$, i.e.,
\begin{equation}
\label{eqUtility}
U_j(x) = -e^{-\lambda_j x}\quad (\lambda_j >0),
\end{equation}
that is characterised by a constant absolute risk aversion level $\lambda_j$. We note that the utility function $U:(0,\infty)\rightarrow\mathbb{R}$ in Eq. \eqref{eqUtility} is $C^2$, and satisfies $U'>0$, $U''<0$ as well as the Inada conditions \cite{Inada1963}. Under $U$, the certainty equivalent of $\E[X]$ for agent $j$ is given by
\begin{equation}
\label{eqCertaintyE}
x_c^j = -\frac{\ln\lr-\E\ls U_j(X)|\bar{\sigma}(\xi_t^j)\rs\rr}{\lambda_j}
\end{equation}

with $x_c^j<\E[X|\bar{\sigma}(\xi_t^2)]$ following from strict concavity. Assuming again $X$ is normal with $\N(0,1)$, the equilibrium strategy in a market where agents maximise their expected utility from terminal wealth is now associated to the objective function which is analogous to Eq. \eqref{eqMaxRiskNeutral} and \eqref{eqMaxRiskAdjusted} and given by
%\footnote{There is no intertemporal consumption.}
\begin{eqnarray}
\label{eqMaxRiskNeutral2}
&\arg\max_{\lr q_t^j \rr} &\displaystyle\sum_{t=0}^T\E \ls U_j(\Pi_t^{j}) \bigr\rvert \bar{\sigma}(\xi_t^j) \rs 
\end{eqnarray}
where each signal-based price $S_t^j$ is worked out, this time, according to certainty equivalence relation in Eq. \eqref{eqCertaintyE} as follows (assuming $s_t=0$ for simplicity):
\begin{eqnarray}
\label{eqCertaintyEPrice}
U_j(S_t^j)=\E\ls U_j(\pm X)\bigr\rvert \xi_t^j\rs&=&-\frac{\displaystyle\int_\mathbb{X} e^{-\lambda_j \pm x} e^{-\frac{x^2}{2}} e^{\kappa_t \lr\sigma_j \xi_{t}^j x - \frac{1}{2} \sigma_j^2 x^2 t\rr } \text{d}x}{\displaystyle\int_\mathbb{X} e^{-\frac{x^2}{2}} e^{\kappa_t \lr\sigma_j \xi_{t}^2 x - \frac{1}{2} \sigma_j^2 x^2 r\rr } \text{d}x}\nn\\
&=&-\frac{\displaystyle\int_\mathbb{X} e^{-\frac{x^2}{2}} e^{\lr\sigma_j \kappa_t \xi_{t}^j \mp \lambda_j \rr x - \lr\frac{1}{2} \sigma_j^2 \kappa_t t\rr  x^2} \text{d}x}{\displaystyle\int_\mathbb{X} e^{-\frac{x^2}{2}} e^{\lr\sigma_j \kappa_t \xi_{t}^j \rr x - \lr\frac{1}{2} \sigma_j^2 \kappa_t t\rr  x^2} \text{d}x}\nn\\
&=&-e^{\frac{(\sigma_j \kappa_t \xi_{t}^j \mp \lambda_j)^2}{2(\sigma_j^2 \kappa_t t +1)}}e^{-\frac{(\sigma_j \kappa_t \xi_{t}^j)^2}{2(\sigma_j^2 \kappa_t t +1)}}\nn\\
&=&-e^{\frac{1}{2}\frac{\lambda_j^2}{\sigma_j^2 \kappa_t t +1} \mp \frac{\sigma_j \kappa_t \xi_{t}^j \lambda_j}{\sigma_j^2 \kappa_t t +1}} .
\end{eqnarray}
This implies, for a bid quote,
\begin{equation}
\label{eqPriceUtility}
S_t^j = U_j^{-1} (\E\ls U_j(X)\bigr\rvert \xi_t^j\rs) = \frac{\sigma_j \kappa_t \xi_{t}^j}{\sigma_j^2 \kappa_t t +1} - \frac{1}{2}\frac{\lambda_j}{\sigma_j^2 \kappa_t t +1}
\end{equation}
and, similarly, for an ask quote,
\begin{equation}
\label{eqPriceUtility2}
S_t^j = -U_j^{-1} (\E\ls U_j(-X)\bigr\rvert \xi_t^j\rs) = \frac{\sigma_j \kappa_t \xi_{t}^j}{\sigma_j^2 \kappa_t t +1} + \frac{1}{2}\frac{\lambda_j}{\sigma_j^2 \kappa_t t +1}.
\end{equation}
The second term in Eq. \eqref{eqPriceUtility} and \eqref{eqPriceUtility2} can be considered as the ``information-adjusted risk premium'' and appears naturally as the bid/ask spread which is inversely proportional to $\sigma_j$ and $t$. Thus, given $\lambda_j$ and $t$, the more an agent is informationally more (less) susceptible, the lower (higher) a risk premium he will have.

The central planner, on his side, will set the price transaction price to the one which equalises their individual signal-based expected utility from the transaction, i.e.,
\begin{equation}
\label{eqPricingRule}
\E\ls U\lr q_t^1 (X-S_t^\ast)\rr | \xi_t^1 \rs = \E\ls U\lr q_t^{2} (X-S_t^\ast)\rr | \xi_t^{2} \rs .
\end{equation}

Assuming again $|q_t^j|\in\{0,1\}$, market is a high-type and agent $2$ buys, the pricing rule in Eq. \eqref{eqPricingRule} can be arranged further as follows:
\begin{eqnarray}
\label{eqClearingRiskAverse0}
-\frac{\displaystyle\int_\mathbb{X} e^{-\lambda_1 (S_t^\ast-x)} e^{-\frac{x^2}{2}} e^{\kappa_t \lr\sigma_1 \xi_{t}^1 x - \frac{1}{2} \sigma_1^2 x^2 t\rr } \text{d}x}{\displaystyle\int_\mathbb{X} e^{-\frac{x^2}{2}} e^{\kappa_t \lr\sigma_1 \xi_{t}^1 x - \frac{1}{2} \sigma_1^2 x^2 t\rr } \text{d}x} &=& -\frac{\displaystyle\int_\mathbb{X} e^{-\lambda_2 (x-S_t^\ast)} e^{-\frac{x^2}{2}} e^{\kappa_t \lr\sigma_2 \xi_{t}^2 x - \frac{1}{2} \sigma_2^2 x^2 t\rr } \text{d}x}{\displaystyle\int_\mathbb{X} e^{-\frac{x^2}{2}} e^{\kappa_t \lr\sigma_2 \xi_{t}^2 x - \frac{1}{2} \sigma_2^2 x^2 r\rr } \text{d}x}\nn\\
-\frac{e^{-\lambda_1 S_t^\ast}\displaystyle\int_\mathbb{X} e^{-\frac{x^2}{2}} e^{\lr\sigma_1 \kappa_t \xi_{t}^1 + \lambda_1 \rr x - \lr\frac{1}{2} \sigma_1^2 \kappa_t t\rr  x^2} \text{d}x}{\displaystyle\int_\mathbb{X} e^{-\frac{x^2}{2}} e^{\lr\sigma_1 \kappa_t \xi_{t}^1\rr x - \lr\frac{1}{2} \sigma_1j^2 \kappa_t t\rr  x^2} \text{d}x} &=& -\frac{e^{\lambda_2 S_t^\ast}\displaystyle\int_\mathbb{X} e^{-\frac{x^2}{2}} e^{\lr\sigma_2 \kappa_t \xi_{t}^2 - \lambda_2 \rr x - \lr\frac{1}{2} \sigma_2^2 \kappa_t t\rr  x^2} \text{d}x}{\displaystyle\int_\mathbb{X} e^{-\frac{x^2}{2}} e^{\lr\sigma_2 \kappa_t \xi_{t}^2\rr x - \lr\frac{1}{2} \sigma_2^2 \kappa_t t\rr  x^2} \text{d}x}\nn\\
-e^{-\lambda_1 S_t^\ast} e^{\frac{(\sigma_1 \kappa_t \xi_{t}^1 + \lambda_1)^2}{2(\sigma_1^2 \kappa_t t +1)}}e^{-\frac{(\sigma_1 \kappa_t \xi_{t}^1)^2}{2(\sigma_1^2 \kappa_t t +1)}}&=&-e^{\lambda_2 S_t^\ast} e^{\frac{(\sigma_2 \kappa_t \xi_{t}^2 - \lambda_2)^2}{2(\sigma_2^2 \kappa_t t +1)}}e^{-\frac{(\sigma_2 \kappa_t \xi_{t}^2)^2}{2(\sigma_2^2 \kappa_t t +1)}}\nn\\
-e^{-\lambda_1 S_t^\ast + \frac{1}{2}\frac{\lambda_1^2}{\sigma_1^2 \kappa_t t +1} + \frac{\sigma_1 \kappa_t \xi_{t}^1 \lambda_1}{\sigma_1^2 \kappa_t t +1}}&=&-e^{\lambda_2 S_t^\ast + \frac{1}{2}\frac{\lambda_2^2}{\sigma_2^2 \kappa_t t +1} - \frac{\sigma_2 \kappa_t \xi_{t}^2 \lambda_2}{\sigma_2^2 \kappa_t t +1}},
\end{eqnarray}
which directly implies, also by Eq. \eqref{eqPriceUtility}, that
\begin{eqnarray}
\label{eqClearingRiskAverse}
S_t^\ast &=& \frac{\lambda_1\lr\frac{\sigma_1 \kappa_t \xi_{t}^1}{\sigma_1^2 \kappa_t t +1}+\frac{1}{2}\frac{\lambda_1}{\sigma_1^2 \kappa_t t +1}\rr + \lambda_2\lr\frac{\sigma_2 \kappa_t \xi_{t}^2}{\sigma_2^2 \kappa_t t +1} - \frac{1}{2}\frac{\lambda_2}{\sigma_2^2 \kappa_t t +1}\rr}{\lambda_1+\lambda_2}\nn\\
&=& \frac{\lambda_1}{\lambda_1+\lambda_2} S_t^1 + \frac{\lambda_2}{\lambda_1+\lambda_2} S_t^2.
\end{eqnarray}

Thus, the information-based market price is the weighted average of risk-adjusted signal-based prices with respect to the risk aversion levels $\lambda_j$. Accordingly, $a_t^s$ and $b_t^s$, and all relevant quantities such as $P_t^j (\xi_c^j | x+,s)$ will need to be updated to re-explore an optimal strategy.

\section{Conclusion}

We've introduced a network of a pair of agents with heterogeneous signal sources (or signal processing skills). The gap between the agents' stepwise average P\&Ls is shown to be directly linked to whether agents mutually learn during trade. The case where agents are `attentive', i.e., do learn from each other, as compared to the case where they are `omitters,' is associated with a significant convergence of P\&Ls. It is also apparent, from the analysis on the impact of mutual learning on how quickly the true value of the fundamental is discovered, that mutual learning works in favour of the agent with an inferior signal, and it turns out to be a win-win situation when they have similar informational skills. 

Furthermore, the existence of a `common knowledge of gains from trade' (in the sense of \cite{BondEraslan2010}) is found to be essential to an equilibrium, and to avoid market shutdowns. We derive explicit formula for the ex ante P\&L of the agent before an auction takes place through quantifying the ex-ante directional quality of his trading signal and his expected P\&L conditional on whether the signal points is correct or erroneous. As expected, perception of a greater informational superiority means a greater likelihood for the agent that his trading signal is directionally correct and greater expected profits. And, this likelihood is stronger in the case of an a priori greater uncertainty (i.e., greater dispersion of the payoff's prior density), and also whenever the agent chooses to refrain from trade. In equilibrium, we find that the optimal strategy is to exploit extra information as soon as it arrives, as the cost of foregoing an expected profit is higher than the cost of sharing the new information. This result not only agrees with most of the informed-trading literature, but also establishes, in a soft sense, the rationale behind high-frequency trading.

Finally, when agents are risk-averse, we are able to recover the risk-premium as inversely proportional to the information flow rate, which allows us to introduce the concept of `information-adjusted risk premium'. The equilibrium price in this case is a nice convex combination of individual signal-based prices with respect to individual levels of risk aversion.

As an outlook, we would like to extend the present analysis to cases where there are multiple agents, the information flows are mimicked by a more general class of L\'{e}vy processes, possibly including jumps, and the amount of information shared is a function of the amount traded.

%\appendix
%\section{Appendix title} 

\bibliographystyle{plain}
\bibliography{NSA_arxiv}
\end{document}